\newtheorem{definition}{Definition}
\definecolor{Gray}{gray}{0.9}
\definecolor{Reddish}{rgb}{0.9,0,0}
\newcolumntype{L}{>{$}l<{$}} % math-mode version of "l" column type
\newtheorem{theorem}{Theorem}[section]
\newtheorem{corollary}{Corollary}[theorem]
\newtheorem{lemma}[theorem]{Lemma}
\newtheorem{example}[theorem]{Example}
\definecolor{rev}{rgb}{0.0, 0.0, 0}
\DeclareMathOperator*{\argmin}{argmin}
\title{How to split the costs among travellers sharing a ride? Aligning system's optimum with users' equilibrium}
\author[1,*]{Andres Fielbaum}
\author[2]{Rafa\l{} Kucharski}
\author[2]{Oded Cats}
\author[1]{Javier Alonso-Mora}
\affil[1]{\small Department of Cognitive Robotics, Delft University of Technology, The Netherlands}
\affil[2]{\small Department of Transport \& Planning, Delft University of Technology, The Netherlands}
\affil[*]{\small Corresponding author: a.s.fielbaumschnitzler@tudelft.nl}
\date{}
\begin{document}
\maketitle
\begin{abstract}
How to form groups in a mobility system that offers shared rides, and how to split the costs within the travellers of a group, are non-trivial tasks, as two objectives conflict: 1) minimising the total costs of the system, and 2) making each user content with her assignment. Aligning both objectives is challenging, as users are not aware of the externalities induced to the rest of the system. 

In this paper, we propose protocols to share the costs within a ride so that optimal solutions can also constitute equilibria. To do this, we model the situation as a game. We show that the traditional notions of equilibrium in game theory (Nash and Strong) are not useful here, and prove that determining whether a Strong Equilibrium exists is an NP-Complete problem. Hence, we propose three alternative equilibrium notions (stronger than Nash and weaker than Strong), depending on how users can coordinate, that effectively represent stable ways to match the users. We then propose three cost-sharing protocols, for which the optimal solutions are an equilibrium for each of the mentioned intermediate notions of equilibrium. The game we study can be seen as a game-version of the well-known \textit{set cover problem}.

Numerical simulations for Amsterdam reveal that our protocols can achieve stable solutions that are always close to the optimum, that there exists a trade-off between total users' costs and how equal do they distribute among them, and that having a central coordinator can have a large impact.

%\paragraph{alternative abstract}

%Mobility platforms group travellers into shared rides to reduce the costs and offer competitive prices. Finding groups that reduce costs for both operator and its clients is not a trivial task. So far solved only globally, yielding system optimal solution of minimal travel time or vehicle hours. The equity among pooled travellers is rarely taken into account.

%In this paper we find user equilibrium for a ride-pooling problem. To this end we formulate a game among pooled travellers in which they may coordinate to improve their utility. As we demonstrate such fully coordinated game is not only unstable but also computationally intractable. Thus, we introduce intermediate notions of equilibria where we balance between  reasonable coordination among travellers and complexity. 

%As we demonstrate, each equilibrium is obtained as a system optimum if it is searched in the pruned search-space with a costs given by a cost-sharing protocol. 
%We introduce three prunings and three cost-sharing protocols yielding three various user equilibria. As we demonstrate in the experiment study, they not only guarantee efficiency of ride pooling (reduced operator costs and high occupancy) but also improve to equity among travellers who are now closer to their optimal matches. We compare trade-off between system and user’s optimality in various equilibria and draw conclusions.

\end{abstract}

\textbf{Keywords: Transportation, Ridepooling,  Cost-sharing, Price of stability, Set cover.}

\section{Introduction}
\subsection{Aligning stakeholders' interests in shared rides}
Transport systems often face a tension between individual choices and global optimisation. As users need to share some limited resources, such as streets or vehicles, their decisions affect other travellers but such externalities are usually not internalised. There are some famous paradoxes illustrating this issue, such as the Braess paradox (\cite{frank1981braess}), which predicts that building an extra road might make everybody worse through their selfish routing decisions, or the Down-Thomson paradox (\cite{mogridge1997self}), which states that new infrastructure for cars can also lead to an overall deterioration due to users switching from public transport to private modes. In public transport, similar situations can occur with new transit lines, that might increase the number of people that cannot board a vehicle due to overcrowding (\cite{renken2018demand}), or may split the users of one line into many, leading to a general reduction in frequencies (\cite{fielbaum2020beyond}), so that a new line can have a negative general effect (\cite{jara2012public}).

The problem of aligning users' interests manifests itself in a novel way in on-demand shared systems, where requests are matched into groups, meaning that the route (thus waiting and in-vehicle times) depends on the circumstantial co-travellers. Grouping might also affect fares, as the operating costs can now be split over the members of the group. That is to say, if users can decide with whom to travel, they might induce externalities to the rest of the users, which implies that their individual interests might not be perfectly aligned with a global optimisation process.

Therefore, in such shared mobility systems it is crucial to distinguish between \textit{optima} and \textit{equilibria}. Optimal solutions are typically pursued by some central non-profit operator or authority, who wants the system to run as efficiently as possible, and refer to minimising the sum of all the costs involved. Equilibria (also called \textit{stable} solutions) deal with users' interests, namely to ensure that everybody is satisfied enough so that they will not coordinate to change the solution. As we discuss below, there are different ways to define an equilibrium depending on how users are assumed to be able to coordinate.

The authority does have a tool so that users internalise (at least to some extent) the externalities they induce to the rest of the system: fares. In fact, when several users share a vehicle, an inevitable question of how to split the monetary costs emerges, and there is no straightforward answer to it. Naive approaches might be to split the costs uniformly between the users, or proportionally to the distance between each user's origin and destination. However, such ideas might lead to undesirable equilibria. If a user is travelling a long distance from her origin to her destination, then nobody would want to share the vehicle with her under a uniform division of the costs; the opposite situation would occur with fares that are proportional to the distance (a common scheme applied in ridepooling systems), as nobody would be willing to share a ride with someone requiring a short trip. Therefore, the question of which \textit{cost-sharing protocol} should be implemented in order to align the users' and the authority's interests is far from trivial. 

This paper is primarily devoted to addressing this research gap.
To do so, we model the described situation as a formal game, in which each user can choose with whom to travel, as long as the selected co-travellers agree. We argue that the usual equilibrium notions (Nash and Strong Equilibria) do not capture appropriately how users can coordinate, which requires us to propose some alternative equilibria notions. For each of these notions, we propose a cost-sharing protocol (i.e., how to share the costs among the users within a group) that makes the optimal solution an equilibrium, so that the authority can propose the users how to match (optimally) in a way that they will be satisfied. Finally, we test our ideas using real-life data in Amsterdam, with a batch of 400 travellers sharing rides.

The equilibrium notions we describe, as well as the corresponding cost-sharing protocols, might be utilised for any mobility system in which groups are formed on-demand and where a sufficient number of vehicles is always available (i.e., there are no rejected requests). The rest of the paper is written assuming a \textit{ridepooling} system,  i.e., a centralised service that matches travellers into groups and assigns vehicles to serve them. Such systems have been intensively studied in recent years (\cite{alonso2017demand,kucharski2020exact,ota2016stars,tsao2019model,fagnant2018dynamic}), and the underlying optimisation problem has been shown to be quite difficult, as for some specifications it extends set cover (\cite{fielbaum2020OnDemand}), although some approximation results have been recently proved (\cite{mori2020request}). Ridepooling services are considered promising for the future of mobility, as they might keep many of the virtues that have made ride-hailing services popular, while reducing the increase in congestion that has been associated with those (\cite{henao2019impact,tirachini2020does,agarwal2019impact,diao2021impacts,wu2021assessing}). Aligning users' and the system's interests is crucial for this purpose, as one needs users to be interested in using the system, and the system to be able to effectively stimulate users to share so that congestion is indeed mitigated. In this paper the focus is on how to group the demand (travellers) for a specific assignment decision, whereas the operation and routing of the vehicles is assumed to be controlled by a system in response to such groups (and not explicitly modelled here).

Besides ridepooling systems, our findings also apply to \textit{ridesharing} services, in which different riders coordinate to transport in a vehicle driven by one of them (\cite{agatz2012optimization,chan2012ridesharing,furuhata2013ridesharing,mourad2019survey,enzi2021modeling,ozkan2020joint}). For instance, \cite{lu2019fair} study a similar model, arguing that when groups are matched, any of the riders within a group can be the driver. Although these systems have not been able to become large-scale yet, they might become a relevant part of future shared mobility systems, when coordination tools continue to evolve. As these systems might operate in non-centralised fashions, users' choices become a structural component, so understanding their impact is crucial.

\subsection{An illustrative example}\label{scn:example}
Let us motivate the paper and the ideas we propose through an illustrative example, whose details will be expanded in the corresponding sections of the paper. First we present a travel demand of three travellers (Figure \ref{Img:ExampleIntro}). We list the groups that can be formed by the pooling travellers and their total costs in Table \ref{tab:ExampleIntro}, based on which we discuss the optimal global solution for the matching problem. We then look at the problem from the individual perspective by considering two possible ways to split those costs among the users also in Table \ref{tab:ExampleIntro}. The first one is arbitrary, and we discuss that there might not be any stable solution if those were the actual individual costs. The second one results from one of the cost-sharing protocols we define later on in the paper (called ``residual-based''), and we show that the optimal solution becomes also stable, that is, accepted from the users' perspective:

Three users require to be transported to a common destination, a demand that could be fulfilled with one, two or three vehicle rides. Figure \ref{Img:ExampleIntro} shows the distances between the nodes, where we mark in green the path that would be required to serve them all with a single vehicle. The second column in Table \ref{tab:ExampleIntro} shows the total cost of each group (encompassing users' and operators' costs), implying that the optimal way to serve the system (i.e., the one that minimises the total cost of the chosen groups) would be to transport users $1$ and $2$ together, and user $3$ travelling alone (marked with a grey background in Table \ref{tab:ExampleIntro}). The origin of user 2 is situated on the shortest path of user 1, which is why it is efficient to group them.

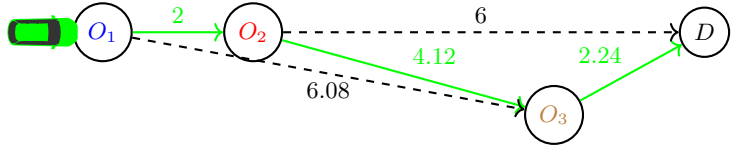
\begin{figure}[h]
    \centering
    \small
    \begin{tikzpicture}[->>=1pt,auto,node distance=2cm,
          thick,main node/.style={circle,draw}]
        \node[main node] at (-6,0) (0) {\color{blue}$O_1$};
          \node[main node] at(-4,0) (1) {\color{red}$O_2$};
         \node[main node] at(0,-1.1) (2) {\color{brown}$O_3$}; 
        \node[main node] at(2,0) (D) {$D$};
          \node [sedan top,body color=green,window color=black!80,minimum width=1cm] at (-6.8,0) {};

         \path [->,green] (0) edge node {2} (1);
         \path [->,green] (1) edge node {4.12} (2);
        \path [->,green] (2) edge node {2.24} (D);
        \path [->,dashed] (1) edge node {\color{black}6} (D);
        \path [->,dashed,below] (0) edge node {\color{black}6.08} (2);
        \end{tikzpicture}
    \caption{An illustrative example of three users that might share a ride. The origins are marked with different colours, and they have a common destination. Each arc shows its length. The green arcs represent the path that would be followed by the green vehicle if serving the whole group.}
    \label{Img:ExampleIntro}
\end{figure}

\begin{table}[H]
    \centering
    \begin{tabular}{|c|c|c|c|c|c|c|c|}
    \hline
    \multirow{2}{*}{\textbf{Group}}     & \multirow{2}{*}{\textbf{Total cost}} & \multicolumn{3}{c|}{\textbf{Cost per user 1}} & \multicolumn{3}{c|}{\textbf{Cost per user 2}}   \\
    \cline{3-8}
    & & User 1 & User 2 & User 3 & User 1 & User 2 & User 3 \\
    \hline
    $\{1\}$ & 23 & 23 & - & - & 23 & - & - \\
    $\{2\}$ & 19 & - & 19 & - & - & 19 & - \\
    \rowcolor{Gray} 
    $\{3\}$ & 9.24 & - & - & 9.24 & - & - & 9.24 \\
    \rowcolor{Gray}
    $\{1,2\}$ & 31 & 22.99 & 8.01 & - & 16.98 & 14.02 & -\\
    $\{1,3\}$ & 31.96 & 22.96 & - & 9 & 22.8 & - & 9.16 \\
    $\{2,3\}$ & 26.08 & - & 18 & 8.08 & - & 17.55 & 8.53 \\
    $\{1,2,3\}$ & 40.44 & 23 & 12 & 5.44 & 18.15 & 15 & 7.29 \\
    \hline
    \end{tabular}
    \caption{Groups' costs and two possible ways to share them among the travellers, for the example shown in Figure \ref{Img:ExampleIntro}. We mark with a grey background the optimal solution when all costs are accounted for.}
    % Mind that for real size scenarios number of possible groups explodes combinatorically with a number of travellers, making them unfeasible to enumerate.
    \label{tab:ExampleIntro}
\end{table}

What would happen if users can choose with whom to travel? Second column of Table \ref{tab:ExampleIntro} shows the total cost per group. Each user does not perceive that number, but only what is allocated to her. Therefore, from the users' perspective, costs per group have to be split somehow among its members, and these individual costs are the relevant feature for the users, so that they will always try to choose a group that minimises their respective costs.

If groups' costs were split according to the first alternative (columns 3, 4 and 5 of Table \ref{tab:ExampleIntro}), there could be no stable solution. On the one hand, the group $\{1,2,3\}$ might never be formed, because users $1$ and $2$ would rather leave the group to travel together. On the other hand, travelling in pairs is in this case always more convenient than travelling alone, but user $3$ prefers to travel with user $2$, user $2$ prefers to travel with user $1$, and user $1$ prefers to travel with user $3$. Note that both the definition of a stable solution and the way in which the costs are split play a role here, which are the two main topics we discuss in this paper. 

The relevance of how the costs are shared is highlighted by the other cost-sharing alternative (last three columns of Table \ref{tab:ExampleIntro}) that results from following one of the cost-splitting protocols proposed in the paper. In this case, the best alternative for users 1 and 2 is to choose the group that also minimises total costs, which leaves user 3 with no other alternative than travelling by herself. That is to say, the users' interests become in this instance perfectly aligned with a system-wide optimisation.

\subsection{Structure of the paper}
The remaining of the paper is organised as follows. Section \ref{scn:relatedworks} reviews the most important previous works. Section \ref{scn:FormalGame} formalises the game and describes the equilibrium notions that we study. Section \ref{scn:pricing} proposes the respective cost-sharing protocols, which are tested numerically in section \ref{scn:results}. Finally, section \ref{scn:conclusions} concludes and proposes some lines for future research.

\section{Related works}\label{scn:relatedworks}

\subsection{Sharing costs in on-demand systems}
%The problem of pooling travellers in on-demand mobility services has been proven to be computationally challenging (\cite{agatz2012optimization}). First solutions were proposed for groups of size two (\cite{santi2014quantifying}), followed by real-time algorithms that also consider vehicles (\cite{alonso2017demand,fagnant2018dynamic}), and making sure that groups are attractive for all co-travellers (\cite{kucharski2020exact}). Up to date most of the studies focused on the global optima, where total operators' or users' costs were minimised. 

The question of how to split the costs within a shared ride has been studied by some previous papers, using the so-called \textit{cooperative game theory}. The aim of such models is forming coalitions, defined as sets of players that decide to cooperate to form the best possible solution; hopefully, there will be a single coalition formed by all the players. In the context of ridepooling (or ridesharing), this implies that the cost of a user within a group does not depend only on her co-travellers, but also on the other members of the coalition, which might be inconvenient as the protocols can be difficult to understand (and thus to accept) by the passengers.

Such an approach is followed by \cite{lu2019fair}, who study the case in which vehicles follow the shortest circuits to serve their passengers (i.e., solving a traveling-salesman-problem, TSP), which might not be the best case when users' costs are part of the decision. They focus on finding the so-called \textit{nucleus} of the cooperative game, meaning that they minimise the maximal dissatisfaction among all the groups that are formed. \cite{levinger2020computing} also study a similar framework as a cooperative game, and focus on how to compute the so-called \textit{Shapley values}, which are known to be fair prices in such cooperative games, but are usually difficult to compute. Their main result is that when only vehicle-kilometers are taken into account and users are sorted a-priori, Shapley values can be calculated in polynomial time. \cite{bistaffa2017cooperative} deals with both the optimal solution and the \textit{kernel} of the cooperative game on a similar ridesharing system.

A different tool that has been widely used is \textit{mechanisms design}, in which users offer prices as in an auction. \cite{kleiner2011mechanism,shen2016online,cheng2014mechanisms,bian2020mechanism} study mechanisms to match that are \textit{incentive compatible}, i.e., in which each user's best strategy is to reveal their true interests: \cite{kleiner2011mechanism} focus on a ridesharing system, \cite{shen2016online} on ridepooling, while \cite{cheng2014mechanisms,bian2020mechanism} on feeder-trunk systems where the on-demand component serves the last mile; in the latter, users do not share the vehicle simultaneously.  Regarding non-shared ridehailing, \cite{asghari2016price} proposes bid mechanisms to assign drivers to riders, while \cite{zhang2015discounted} study a \textit{discounted trade reduction mechanism} scheme to satisfy every agent. 

Other studies use different techniques to set pricing that attain a stable matching among different agents: vehicles and users are matched by \cite{rasulkhani2019route} in a generic many-to-one transport system (i.e., where many passengers can utilise the same route), while \cite{peng2020stable} propose how to define the payments from passengers to drivers for ridesharing. \cite{chen2018price} consider a model in which users can choose among a set of options that offer different prices and pick-up times. Stable assignments between pairs of riders (or between one driver and one rider) are studied by \cite{wang2018stable,zhang2018mobility,chau2020decentralized,yan2020matching}. \cite{furuhata2015online} study cost-sharing mechanisms for ridesharing, that are updated online as new passengers enter the system, guaranteeing that fares can never increase for a passenger. \cite{ke2020pricing} study the emerging market equilibria in both pooled and non-pooled systems, including drivers' and riders' interests. 

%Finally, \cite{Feng2020Two,torrico2017dynamic} focus on the theoretical problems that emerge when finding the best matching among requests and vehicles, considering that users' information is not fully available. 
In all, the relationship between efficient prices and stable/optimal assignments in on-demand mobility systems has been increasingly studied during last years. However, the pricing strategies have mostly relied either on prices that might depend on users travelling in other vehicles, or on auctions, so that the question about direct prices for a shared trip remains yet to be studied.

\subsection{Game theory and flexible mobility systems}
Flexible systems require deciding how to match vehicles and users, so that conflicts between the interests of different stakeholders usually emerge. Therefore, using game theory is a natural idea that has been followed by several papers in the past to study different issues related to these mobility systems. \cite{hernandez2018game} consider an abstract model of carpooling, in which the users' decisions (or \textit{strategies}) are whether to cooperate or not. If they do not cooperate, they can decide selfishly, but receive a punishment, whereas cooperating entails a reward. They focus on the evolution in time of users' decisions. \cite{schroder2020Anomalous} use game theory to understand the emergence of surge pricing in ridehailing systems, through a model in which drivers can decide when to turn-off their devices (a similar analyses is performed by \cite{castillo2017surge}, but without employing game theory).  \cite{kucharski2020if} study the system-wide impact of users arriving late at their pick-up locations, which is also modelled using game theory, in which users' decide strategically how late to arrive, taken into account the annoyance of both waiting for other passengers and arriving late at the final destination. \cite{jacob2021ride} studies a for-profit ridehailing system in which users strategically decide whether to travel solo, pooling, or not using the system at all.

As we explain in section \ref{scn:FormalGame}, the game we study can be analysed as a game-theory version of the classical combinatorial problem \textit{set cover}. A similar game has been studied by \cite{escoffier2010impact}, although they focus on the so-called \textit{Price of Anarchy} (i.e., which is the worst possible equilibrium). Moreover, in their model the cost of a user within a group can increase if the co-travellers decide differently, which is reasonable for the abstract model they study, but does not seem to be realistic for modelling a mobility system.

In spite of its relevance, (non-cooperative) game theory has not been widely used to study the matching issues emerging in mobility systems in which users share the same vehicle.

\subsection{Contribution}
This paper's contribution is threefold: First, we identify that optimal ridepooling solutions are not always stable for travellers and formalise the problem as a game. Second, we show that the traditional notions of Nash and Strong Equilibria are not suitable for this problem, and propose three intermediate notions of equilibrium. The third and main contribution of this paper is proposing cost-sharing protocols, so that for each of the three equilibrium notions we can reach a \textit{price of stability} (PoS) equal to 1, i.e., that make any optimal matching an equilibrium. The rules imposed by these protocols to split the costs within a shared trip depend only on the characteristics of the trip itself, and not on the other groups and trips in the system. On top of that, we run numerical simulations over a real-life case from Amsterdam, that confirm our results, showing that optimal solutions are in fact stable for users. Each protocol yields a PoS that is close to 1 regardless of the equilibrium notion, whereas that the price of anarchy can be significantly larger.

\section{The game and the equilibria} \label{scn:FormalGame}

Consider a set of passengers $P=\{1,...,n\}$ that need to be transported by a ridepooling system, i.e., need to be matched into groups that will be transported each by a different vehicle. We assume that there are enough vehicles (at least $n$), so the costs of the system depend only on how the requests are pooled.  There is a set of feasible groups denoted $\mathcal{G}=\{G_1,...,G_m\}$, which depend on some exogenous conditions (like the capacity of the vehicles, or declining groups that require too long detours), i.e., we assume that $\mathcal{G}$ is fixed. Each feasible group $G$ is a subset of $P$, and all its subsets are assumed to be feasible as well, i.e. $G \in \mathcal{G}, H \subseteq G \Rightarrow H \in \mathcal{G}$. In particular, each passenger might choose to travel alone, i.e. $\{i\} \in \mathcal{G} \: \forall i \in P$.  

In the following, we are interested in comparing different ways to match the users. A \textit{matching} is a selection of groups $\{H_1,...,H_\eta\} \subseteq \mathcal{G}$, such that $\eta>0$ is any integer and each $i \in P$ belongs to exactly one of these groups $H$. 

\subsection{Preliminaries}
\subsubsection{Defining a single cost function}
Mobility systems induce costs of different nature and to various agents. We are interested here in comparing the total costs of the resulting matching when groups are formed optimally versus when each user tries to minimise her own individual costs. For such a comparison, monetary transfers between agents are not relevant as they do not affect total costs. However, they are indeed relevant from the users' individual point of view when searching for an equilibrium.

Let us be more specific. Consider a group $G \subseteq P$. If a vehicle is instructed to serve that group, the following costs might be taken into account:
\begin{itemize}
    \item Costs faced by the vehicle's operator, such as fuel, maintenance, depreciation and drivers' wages.
    \item Users' costs: they should at least include total travelling time, involving waiting and in-vehicle. Other specific ridepooling-related aspects might be considered as well, such as the system's unreliability (\cite{fielbaum2020unreliability,alonso2020value}) or users' willingness to share the vehicle with strangers (\cite{ho2018potential,lavieri2019modeling,alonso2020determinants,winter2020identifying,konig2020travellers}).
    \item Other societal impacts, such as congestion or emissions.
\end{itemize}

All those costs have to be monetarised so that they can be compared. All of which depend on the route followed by the vehicle. For the purpose of this study, such a route is exogenous, so that the mentioned costs are given. Usual criteria to define the route are minimising total costs, or including some additional rules like first in-first out.

Therefore, a given group $G$ implies a total cost $c(G)$, encompassing all the costs aforementioned:

\begin{equation}
\label{eq:cost}
    c(G) = \sum_{i \in G} C(G,i) + C_O(G) + C_S(G)
\end{equation}

Where $C(G,i)$ are the direct costs faced by user $i$, like her waiting and in-vehicle times, $C_O(G)$ are the costs faced by the operator and $C_S(G)$ are the societal costs. Not every group can be feasibly matched together, i.e. is mutually compatible. A natural assumption is that if $G$ is feasible, then every $H \subseteq G$ is feasible as well, with $c(G) \leq c(H)$ (and $c(\emptyset)=0)$. Differently from other related papers on the topic, we do not assume any type of supermodularity or submodularity, i.e., if $G_1$ and $G_2$ are disjoint groups and $G_1 \cup G_2$ is feasible, it might hold either that  $c(G_1)+c(G_2) \leq c(G_1 \cup G_2)$ or that $c(G_1)+c(G_2) \geq c(G_1 \cup G_2)$, because the sign of this relationship represents how efficient is to serve the whole group of passengers together (which depends, for instance, on how close are their origins and destinations).

To exemplify these ideas, let us consider the illustrative case depicted in Figure \ref{Img:ExampleIntro}. The detailed costs are exhibited in Table \ref{tab:ExampleCosts}, where we assume that operators' costs include a fixed component $c_K$ and operational costs that are proportional to the driving time by a factor $c_T$. Users' costs increase proportional to the waiting time by a factor $c_W$ and to the in-vehicle time by a factor $c_V$, and we assume no extra societal costs ($C_S=0$). Each trip is assumed to begin at the origin of the first user within the respective group, who has zero waiting time. Colours in the second row of Table \ref{tab:ExampleCosts} match the respective users in Figure \ref{Img:ExampleIntro}. The values exhibited in Table \ref{tab:ExampleIntro} in the Introduction are computed by assuming $c_K=7$ and $c_V=c_W=c_T=1$.

\begin{table}[h]
    \centering
    \begin{tabular}{|c|c|c|}
    \hline
     \textbf{Group}    & \textbf{Users' costs} & \textbf{Operators' costs} \\
     \hline
    $\{1\}$     & $\color{blue}8\color{black}c_V$ & $c_K+8c_T$\\
    $\{2\}$     & $\color{red}6\color{black}c_V$ & $c_K+6c_T$ \\
    $\{3\}$     & $\color{brown}2.24\color{black}c_V$ & $c_K+2.24c_T$\\
    $\{1,2\}$     & $(\color{blue}8\color{black}+\color{red}6\color{black})c_V\color{black}+\color{red}2\color{black}c_W $ & $c_K+8c_T$\\
    $\{1,3\}$     & $(\color{blue}8.32\color{black}+\color{brown}2.24\color{black})c_V\color{black}+\color{brown}6.08\color{black}c_W$ & $c_K+8.32c_T$\\
    $\{2,3\}$     &  $(\color{red}6.36\color{black}+\color{brown}2.24\color{black})c_V\color{black}+\color{brown}4.12\color{black}c_W$  & $c_K+6.36c_T$\\
    $\{1,2,3\}$     & $(\color{blue}8.36\color{black}+\color{red}6.36\color{black}+\color{brown}2.24\color{black})c_V+(\color{red}2\color{black}+\color{brown}6.12\color{black})c_W$ & $c_K+8.36c_T$\\
    \hline
    \end{tabular}
    \caption{Users' and operators' costs for each group in the illustrative example from Figure \ref{Img:ExampleIntro}. Colours in the second column match the respective users in Figure \ref{Img:ExampleIntro}.}
    \label{tab:ExampleCosts}
\end{table}

The costs $\left(c(G):G \in \mathcal{G}\right)$ are the relevant ones from a system's standpoint. The optimal matching can be found using the ILP shown in Eq. \eqref{Eqn:Opt}, where $x_G=1$ if and only if group $G$ is selected to be executed, and the constraint ensures that each passengers is transported in exactly one group. The illustrative example is useful to gain some intuition: if we only considered users' costs ($c_K=c_T=0$), then the optimal solution would be that everybody travels alone. In the opposite case, when only accounting for operators' costs ($c_V=c_W=0$), the optimal solution could be either forming a single group with the three users, or transporting the groups $\{1,2\}$ and $\{3\}$, depending on the relative values of $c_K$ and $c_T$ . The solution of Eq. \eqref{Eqn:Opt} represents the case in which all the costs are accounted for.

\begin{align}
\label{Eqn:Opt}
    \min_{x_G \in \{0,1\}} & \sum_{G \in \mathcal{G}} x_G c(G) \\
\nonumber
    \text{s.t. } & \sum_{G: i \in G} x_G = 1 & \forall i \in P 
\end{align}

Apart from the costs mentioned above, some pricing scheme can be decided, i.e., monetary fares $f(G,i) \: \forall i \in G$. The real costs faced by passenger $i$ when sharing with the group $G$ are then

\begin{equation}
    c_i(G)=C(G,i)+f(G,i)
\end{equation}

In the example represented by Table \ref{tab:ExampleCosts}, determining the fares could be naturally represented by splitting somehow the numbers in the third column (operators' costs) among the users of the respective group.

We assume that fares and other users' costs are perfectly interchangeable, given the appropriate monetary equivalencies. That is to say, we assume that a user considers all the elements that are costly for her (at least time and monetary costs) and computes a single figure encompassing all of it. This single figure allows the user to compare her different groups, i.e., user $i$ would always opt for the available group that minimises $c_i(G)$, regardless of its specific decomposition between $C(G,i)$ and $f(G,i)$. 

As the direct costs $C(G,i)$ are assumed to be exogenous, defining a pricing scheme is equivalent to deciding $c_i(G)$ for every $i$ and $G$. In the rest of the paper, we deal only with the two relevant functions affecting the decisions taken by the system and by the users:
\begin{itemize}
    \item $\left(c(G): G \in \mathcal{G}\right)$, which are the relevant amounts from a system's standpoint, as all the costs are encompassed there. The optimality and efficiency of each feasible matching depends solely on these numbers.
    \item $\left(c_i(G): i \in G, G \in \mathcal{G}\right)$, which are the relevant amounts from the passengers' standpoint. When users decide in which group they want to belong, i.e., when searching for equilibria, this is the only figure that matters.
    \end{itemize}
In other words, the direct costs $C(G,i), C_O(G), C_S(G)$, as well as the fares $f(G,i)$ will be implicit in all future definitions and calculations.

The problem faced by an operator that is aiming for an optimal solution that is also stable can be seen as two-level optimisation: First, the operator decides on some pricing scheme, and then the users choose how to match. \textbf{The question that needs to be solved by the system is then how to set prices so that users match in a way that yields low total costs. }

%In this study we are interested in results that emerge when passengers do decide with whom they want to be grouped. 
Eq. \eqref{Eqn:Opt} serves as the benchmark to compare how much worse is a resulting (stable) matching compared with the system optimum solution. Therefore, it is useful to see that Eq. \eqref{Eqn:Opt} can be read exactly as the set cover problem, which is known to be NP-Hard, and even further, that it is impossible to find a polynomial algorithm that provides a solution that approximates the optimal solution by a constant factor, unless $P=NP$ (\cite{raz1997sub}). That is to say, our problem can be seen as a version of set cover in which each element of the set might decide which subset is covering it\footnote{In the traditional set cover problem, elements of the universal set might be covered by more than one subset, which we do not allow here when we impose an equality in the constraint in Eq. \eqref{Eqn:Opt}. However, it is straightforward to see that the problems are indeed equivalent, modifying the usual definition of set cover: For each $G \in \mathcal{G}$, for each $H \subseteq G$, if $H \notin \mathcal{G}$, we add $H$ to $\mathcal{G}$ with $c(H)=\min_{G' \in \mathcal{G}: H \subset G'} c(G')$. Doing so, there is always an optimal solution covering each element exactly once.}.

Although Problem \ref{Eqn:Opt} could be hard to solve (because set cover is NP-Hard), real-life ridepooling situations usually generate versions in which standard solvers are able to manage it (\cite{alonso2017demand,kucharski2020exact,fielbaum2020OnDemand}). Therefore, we shall assume in the rest of the paper that the optimal solution can be computed in reasonable time.

The rest of the section is devoted to the second-level of the problem mentioned above, i.e., from the users' point of view, meaning that prices are exogenous and fixed. We aim at understanding users' decisions and characterising the equilibria. We now discuss the formal game, and show that there is no unique notion of equilibrium, as they depend on which level of coordination is assumed users might reach. The problem of defining such individual costs $c_i(G)$ is discussed in section \ref{scn:pricing}. 

 %The problem of defining users' individual costs $c_i(G)$ discussed in section \ref{scn:pricing}. Such decisions depend on how do we expect users to react to those individual costs, and in particular, which situations do we recognise as stable. This is related to the notions of \textit{equilibrium}, i.e., finding some matching in which all users are satisfied. From the point of view of the users, their costs $c_i(G)$ are exogenous, so in the rest of this section (that we devote to defining the game and discussing the equilibrium notions) they are assumed to be given and fixed.

\subsubsection{The formal Co-Travellers Game CTG}
We can now formally define the underlying game, denoted as CTG: \textit{Co-Travellers Game}. The players of CTG are the passengers $p_1,...,p_n$, who are the ones taking the decisions. Each passenger $i$ can decide in which group she wants to travel, i.e., her possible strategies are the sets $G$ in $\mathcal{G}_i$, defined as $\mathcal{G}_i=\{G \in \mathcal{G}: i \in G\}$.

Following the usual notation in game theory, we denote $G_i$ the strategy (i.e., the group) chosen by $i$, and $G_{-i}=(G_j)_{j \neq i}$ the profile of strategies chosen by the other players. When $i$ chooses $G_i$, there are two possibilities: either all the other players in $G_i$ choose $G_i$ as well, or at least one player does not. We assume that no one can be forced to join a group, so $G$ is executed only in the first case, yielding a disutility function $U$:

\begin{equation} \label{Eqn:Disutility}
 U_i(G_i,G_{-i})
 \begin{cases}
 c_i(G_i) & \text{ if } G_j=G_i \forall j \in G_i \\
 +\infty & \text{ otherwise }
 \end{cases}   
\end{equation}

The $+\infty$ disutility in Eq. \eqref{Eqn:Disutility} represents the case in which the group is not executed, so $i$ cannot travel. We say that a profile of strategies $(G_i)_{i \in P}$ is \textit{coordinated} if this situation does not happen in the whole system, i.e., if nobody chooses a group that is not being executed. Note that a coordinate profile of strategies is a matching (i.e., it satisfies the constraint in Eq. \ref{Eqn:Opt}).

Although non-coordinated situations are feasible, they shall not occur because any passenger would opt for travelling alone rather than choosing a group that is not being executed. This is formalised in the following two Lemmas:

\begin{lemma}
Any Nash Equilibrium of the CTG is formed only by pure strategies.
\end{lemma}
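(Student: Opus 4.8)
The plan is to take an arbitrary Nash equilibrium, which a priori may be mixed, and show that every player's strategy places all of its probability on a single group. Write $\sigma=(\sigma_1,\dots,\sigma_n)$ for the equilibrium, where each $\sigma_i$ is a probability distribution over $\mathcal{G}_i$, and assume as usual that the players randomise independently. The starting observation is that the singleton group $\{i\}$ belongs to $\mathcal{G}_i$ and is executed no matter what the other players do, so it yields the finite disutility $c_i(\{i\})$ with certainty. Consequently player $i$ can always guarantee herself a finite expected disutility, and so her equilibrium payoff is at most $c_i(\{i\})<+\infty$; in particular no best response can incur the $+\infty$ penalty of Eq. \eqref{Eqn:Disutility}.

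The second step is to characterise when a pure strategy yields finite expected disutility. If player $i$ plays a group $G$ against $\sigma_{-i}$, her expected disutility is $c_i(G)$ multiplied by the probability that every other member of $G$ also chooses $G$, plus $+\infty$ times the complementary probability. By independence this first probability equals $\prod_{j\in G\setminus\{i\}}\mathbb{P}[\sigma_j=G]$. Hence the expected disutility of $G$ is finite if and only if $\mathbb{P}[\sigma_j=G]=1$ for every co-member $j\in G\setminus\{i\}$. Combining this with the first step, every group $G$ in the support of $\sigma_i$ (each of which must be a best response, hence finite) forces all of its other members to play $G$ with probability one.

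The final step derives a contradiction from the assumption that some $\sigma_i$ is not pure. A non-pure $\sigma_i$ has at least two groups in its support; since $\{i\}$ is the only singleton available to $i$, at least one supported group $G$ satisfies $|G|\ge 2$ and $0<\mathbb{P}[\sigma_i=G]<1$. By the previous step each co-member $j\in G\setminus\{i\}$ plays $G$ with probability one, so $G$ is the entire support of $\sigma_j$. Applying the same finiteness characterisation to player $j$ — whose chosen group $G$ must also give finite disutility in equilibrium — forces every member of $G$, including $i$, to play $G$ with probability one, i.e. $\mathbb{P}[\sigma_i=G]=1$, contradicting $\mathbb{P}[\sigma_i=G]<1$. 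Therefore every $\sigma_i$ is pure.

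The main obstacle I anticipate is the circular dependence in the last step: $i$'s willingness to risk the multi-passenger group $G$ is predicated on her co-members committing fully to $G$, while their commitment is in turn predicated on $i$'s. The argument must invoke the best-response (finite-disutility) condition symmetrically for $i$ and for a co-member $j$, and use independence carefully to close the loop and rule out any genuinely fractional weight on a group of size at least two. Exploiting the degenerate structure of the strategy set — that $\{i\}$ is the unique singleton, so any non-pure mixture necessarily loads some group of size $\ge 2$ with weight strictly below one — is what makes the contradiction go through cleanly.
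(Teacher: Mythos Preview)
Your proof is correct and follows essentially the same approach as the paper: both exploit that the singleton $\{i\}$ guarantees finite disutility, so any group $G$ in a player's equilibrium support must have all co-members committing to $G$ with probability one, which is then turned into a contradiction via the symmetric condition on a co-member. The paper argues in the reverse direction (if $i$ puts weight $p<1$ on $G$, then any co-member $j$ faces $+\infty$ on $G$ and thus assigns it zero probability, so $i$ in turn faces $+\infty$ on $G$), but the underlying mechanism is identical and your version is, if anything, spelled out more carefully.
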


\begin{proof}
Suppose player $i$ chooses a mixed strategy, with $Prob(G_i=G)=p$ for some $G \in \mathcal{G}_i, p \in (0,1)$, $G \neq \{i\}$. Let $j \in G$ be a different player. Then 
\begin{equation*}
  E(U_j(G)) \geq (1-p) \cdot U_j(G|G_i \neq G) = (1-p) \cdot +\infty = + \infty  
\end{equation*}
That is to say, if $j$ chooses $G$, then it faces an infinity expected disutility, because there is a chance that the group is not executed. Hence, $G$ cannot be chosen by $j$ with a non-zero probability in an equilibrium, because travelling alone yields a lower disutility.
The same argument can now be applied to $i$. Given that we already know that $j$ is assigning zero probability to $G$:

\begin{equation*}
    U_i(G_i, G_{-i})  = + \infty
\end{equation*}

Hence, $i$ can improve her situation by assigning the mentioned probability $p$ to $\{i\}$ instead of $G$, which proves that this situation is not an equilibrium.
\end{proof}

Consequently, in the remaining of the paper we assume that players never choose mixed strategies.

\begin{lemma}
If a profile of strategies is a Nash Equilibrium, then it is coordinated.
\end{lemma}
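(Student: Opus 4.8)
The plan is to argue by contradiction, leaning on the previous lemma so that I only need to reason about pure-strategy profiles. Suppose $(G_i)_{i \in P}$ is a Nash Equilibrium that is \emph{not} coordinated. By the definition of "coordinated" given above, this means there is at least one player $i$ who chooses a group $G_i$ that is not being executed, i.e.\ there exists some $j \in G_i$ with $G_j \neq G_i$. By the disutility function in Eq.\ \eqref{Eqn:Disutility}, this immediately forces $U_i(G_i, G_{-i}) = +\infty$.

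Next I would exhibit a profitable unilateral deviation for player $i$, which is what a Nash Equilibrium forbids. The natural candidate is for $i$ to switch to travelling alone, i.e.\ to the singleton strategy $\{i\}$; this is always an available strategy because the excerpt assumes $\{i\} \in \mathcal{G}$ for every $i \in P$, so $\{i\} \in \mathcal{G}_i$. The key observation is that the singleton group $\{i\}$ is executed regardless of what the other players do: its only member is $i$ herself, so the condition "$G_k = \{i\}$ for all $k \in \{i\}$" holds trivially once $i$ selects it. Crucially, a Nash deviation keeps $G_{-i}$ fixed, so I only need to check $i$'s own group membership, and that check is vacuous for a singleton.

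Consequently, after this deviation $i$ faces disutility $U_i(\{i\}, G_{-i}) = c_i(\{i\})$, which is finite. Since $c_i(\{i\}) < +\infty = U_i(G_i, G_{-i})$, player $i$ strictly benefits from deviating, contradicting the assumption that the profile is a Nash Equilibrium. This contradiction shows that every Nash Equilibrium must be coordinated.

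I do not expect any genuine obstacle here: the argument is a short one-deviation contradiction, essentially the same move used in the proof of the preceding lemma (escaping an infinite disutility by retreating to the singleton group). The only points deserving explicit mention are (a) that the singleton strategy is guaranteed to exist by the standing feasibility assumption on $\mathcal{G}$, and (b) that a singleton is automatically executed, so no coordination with the fixed profile $G_{-i}$ is required for the deviation to succeed. Everything else is immediate from Eq.\ \eqref{Eqn:Disutility}.
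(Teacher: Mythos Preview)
Your proposal is correct and matches the paper's own proof almost verbatim: take a player whose chosen group is not executed, note her disutility is $+\infty$, and observe that deviating to the always-available singleton $\{i\}$ yields a finite cost, contradicting the Nash condition. The paper's version is simply terser, omitting the explicit remarks (a) and (b) that you spell out.
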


\begin{proof}
Let $(G_i)_{i \in P}$ be a non-coordinated profile of strategies. Let $i$ be a player choosing a group that is not being executed. Then $i$ would be better-off by choosing $\{i\}$ instead, implying that $(G_i)_{i \in P}$ is not an equilibrium.
\end{proof}

To simplify the notation, and because our analyses deal with users reaching equilibria, we drop the disutility $U_i$ and will utilise the costs $c_i$ directly, because they always coincide. As we now discuss (section \ref{Scn:TE}), the traditional notions of equilibrium are not the most suitable ones for this game, which is why we will propose alternative notions afterwards (section \ref{SCN:RE}). We will create abbreviations, that will mark with an initial ``T'' the traditional notions of equilibrium, and with an ``R'' (from ``Ridepooling'') the alternative ones.

An equilibrium notion is characterised by the level of coordination that is admitted among the users. For instance, the traditional Nash Equilibrium assumes that users cannot coordinate at all, so each user can only act unilaterally, whereas the traditional Strong Equilibrium assumes full coordination. The alternative notions we will propose lay between those two. For any equilibrium notion, we are mostly interested in the PoS, i.e., in the ``best'' possible equilibrium, meaning the equilibrium that yields lowest total costs, which is interpreted as the matching that the system should propose to the users. When we run numerical experiments, we will also look at the \textit{price of anarchy} (PoA), which is the ``worst'' possible equilibrium, i.e., the one that yields the highest total costs, which represents the worst case scenario when users coordinate freely without any recommendation from the system.

%We now propose different types of equilibrium in this game, all of them being at least as strong as the traditional Nash Equilibrium. Therefore, we can assume that we only deal with coordinated situations. This allows us to simplify the notation, dropping the disutility $U_i$ and using the costs $c_i$ directly, because they always coincide. For all those definitions of equilibrium, we are interested in the Price-of-Stability and the Price-of-Anarchy (PoS and PoA, respectively), defined as the ratio between: the total costs of the best (PoS) and worst (PoA) equilibrium, and the total cost of the optimal solution (i.e., the solution of Eq. \ref{Eqn:Opt}, which might not be an equilibrium). PoS can be interpreted as the system proposing a centralised assignment to the users that they will be satisfied with, whereas PoA can be seen as the worst possible outcome if users are left uncoordinated.

\subsection{Traditional notions of equilibrium} \label{Scn:TE}
\subsubsection{Traditional Nash Equilibrium (TNE)}.
The most common notion in game theory is the Nash Equilibrium:
\begin{definition}
A matching $(G_i)_{i \in P}$ is a TNE if no player can unilaterally improve her situation, i.e.

\begin{equation} \label{Eqn:TNE}
\forall i \in P, \forall G \in \mathcal{G}_i, c_i(G_i,G_{-i}) \leq c_i(G,G_{-i})   
\end{equation}
\end{definition}

We know that a TNE exists, because every finite game has at least one mixed or pure Nash Equilibrium, and this game does not admit mixed equilibria. Moreover, consider a TNE and a player $i$: Eq. \eqref{Eqn:TNE} is trivially fulfilled by any $G$ that is neither $G_i$ nor $\{i\}$, because if $i$ would switch to such a $G$, her group would not be executed and her cost would be $+\infty$. Therefore, Eq. \eqref{Eqn:TNE} can be replaced by 

\begin{equation} \label{Eqn:TNE2}
\forall i \in P, c_i(G_i,G_{-i}) \leq c_i(\{i\},G_{-i})   
\end{equation}

We define now the set of groups in which Eq. \eqref{Eqn:TNE2} holds:

\begin{equation}
    \mathcal{G}_{TNE}=\{G \in \mathcal{G}: \forall i \in G, c_i(G) \leq c_i(\{i\}) \}
\end{equation}

Any solution to problem \ref{Eqn:Opt} that only selects groups from $\mathcal{G}_{TNE}$ would be a TNE. Hence, we can find the PoS and the PoA by solving, respectively, problems \ref{Eqn:BPoATNE} and \ref{Eqn:WPoATNE}:

\begin{align}
\label{Eqn:BPoATNE}
    \min_{x_G \in \{0,1\}} & \sum_{G \in \mathcal{G}_{TNE}} x_G c(G) \\
\nonumber
    \text{s.t. } & \sum_{G: i \in G} x_G = 1 & \forall i \in P 
\end{align}

\begin{align}
\label{Eqn:WPoATNE}
    \max_{x_G \in \{0,1\}} & \sum_{G \in \mathcal{G}_{TNE}} x_G c(G) \\
\nonumber
    \text{s.t. } & \sum_{G: i \in G} x_G = 1 & \forall i \in P 
\end{align}

Problem \ref{Eqn:BPoATNE} results from constraining problem \ref{Eqn:Opt} to consider only groups in $\mathcal{G}_{TNE}$, i.e. finding the best solution among this subset of groups, while problem \ref{Eqn:WPoATNE} switches from a minimisation to a maximisation problem, to find the worst solution.

%In order to solve problems \ref{Eqn:BPoATNE}-\ref{Eqn:WPoATNE}, we need to first find the set $\mathcal{G}_{TNE}$, which is done easily by pruning $\mathcal{G}$. In short, we remove all groups $G$ not fulfilling Eq. \ref{Eqn:TNE2}. The intuition behind it is that a group $G$ having more than one passenger cannot be stable if a user would rather travel alone than in $G$. This is detailed in algorithm \ref{Alg:TNE}.

%\begin{algorithm}
%\begin{algorithmic}
%\caption{Construction of %$\mathcal{G}_{TNE}$.}\label{Alg:TNE}
%\STATE {$\mathcal{G}_{TNE}=\mathcal{G}$};
%\FORALL{$G \in \mathcal{G}$}
%\IF{$\exists i \in G \text{ such that } %c_i(\{i\})<c_i(G)$}
%\STATE {$\mathcal{G}_{TNE} \leftarrow %\mathcal{G}_{TNE} \setminus \{G\} $};
%\ENDIF
%\ENDFOR
%\STATE Output $\mathcal{G}_{TNE}$;
%\end{algorithmic}
%\end{algorithm}

Note that $\{i\} \in \mathcal{G}_{TNE} \forall i \in P$. This is troublesome, because it entails that a trivial TNE is that everybody travels alone, which might not be a reasonable representation of reality. If we think of the ridepooling service as provided by an app, a TNE would imply that a user that is suggested to join a group can only choose to accept such a group or to dismiss it and travel alone. This occurs because TNE is a weak notion, that forbids any type of coordination among users, which precludes accounting for the complexity in the equilibrium analysis that is introduced by the change of sharing. All of this justifies analysing other types of equilibria.

\subsubsection{Traditional Strong Equilibrium (TSE)}\label{scn:TSE}
An alternative usual notion in game theory, that represents the other extreme situation, is the Strong Equilibrium, in which each subset of players can coordinate:

\begin{definition}
A matching $(G_i)_{i \in P}$ is a TSE if no group of players can jointly improve their situation. As the utility of a player only depends on her co-travellers, this happens if and only if
\end{definition}

\begin{equation} \label{Eqn:TSE}
    \forall G \in \mathcal{G}, \text{ either } \forall i \in G,  G_i=G, \text{ or } \exists i \in G \text{ such that } c_i(G_i)\leq c_i(G)
\end{equation}

Eq. \eqref{Eqn:TSE} ensures that if a group $G$ is not taking place, then it cannot happen that all the users from $G$ would want to abandon their current groups to form $G$. This is a quite restrictive notion of an equilibrium because it permits any type of coordination between all the users in the game. Therefore, it does not come as a surprise that there are cases in which there is no such an equilibrium\footnote{When the system can decide on the prices, there is a trivial way to induce a TSE, by sharing the total costs of a group uniformly among the users. After that, a greedy algorithm that repeatedly picks the group of non-selected users with the lowest average cost will output a TSE. It is noteworthy that this mimics exactly the well-known polynomial greedy algorithm for set cover, implying that its cost is no larger than $\log(n+1)$ times the optimal (\cite{chvatal1979greedy,raz1997sub}). However, as discussed in the Introduction, such a way to split the costs would probably not be accepted by users travelling short distances.}, as shown in the following example:

\begin{example} \label{Example:TSE}
Consider an instance of CTG with three players $A, B$ and $C$. For all of them, travelling alone is the worst option, and it is infeasible that they all share a vehicle. Regarding pairwise trips, $A$ prefers to travel with $B$ over travelling with $C$, $B$ prefers $C$ over $A$, and $C$ prefers $A$ over $B$. Such an instance presents no TSE: if they all travel alone, any pair would prefer to join, while if two players are traveling together, say $A$ and $B$, then $B$ would coordinate with $C$ to create the group $\{B,C\}$, and analogous situations would occur for any other pair.
\end{example}

Not only an equilibrium might not exist. Determining whether that is the case is computationally intractable, unless all the groups are formed by one or two users. 
\begin{theorem} \label{Thm:TSENPComplete}
a) Given an instance of CTG, determining whether a TSE exists is NP-Complete. This is true even if groups' sizes cannot be larger than 3. 
b) On the other hand, if no group has more than 2 players, it is possible to determine the existence of a TSE in polynomial time.
\end{theorem}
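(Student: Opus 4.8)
The plan is to treat the two parts separately, reframing a TSE as a \emph{core-stable} outcome of the underlying coalition-formation game: by Eq. \eqref{Eqn:TSE}, a matching is a TSE exactly when no feasible group $G$ is \emph{blocking}, i.e.\ when there is no $G \in \mathcal{G}$ all of whose members strictly prefer $G$ to their current assignment ($c_i(G) < c_i(G_i)$ for every $i \in G$). Part (a) then asks whether a blocking-free matching exists, with $\mathcal{G}$ restricted to groups of size at most three, while Part (b) restricts $\mathcal{G}$ to singletons and pairs.

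For \textbf{Part (a)}, membership in NP is immediate: when all groups have at most three players, $|\mathcal{G}| = O(n^3)$, so $\mathcal{G}$ is given explicitly and has polynomial size; a matching serves as a certificate, and verifying that it is a TSE amounts to checking, for each of the $O(n^3)$ groups, the $O(1)$ blocking condition of Eq. \eqref{Eqn:TSE}. The substance is NP-hardness. I would reduce from a standard NP-complete problem — 3-SAT is the natural candidate — using the cyclic triple of Example \ref{Example:TSE} as the elementary ``unstable'' object. Concretely, the plan is to build (i) \emph{variable gadgets} that, in isolation, admit exactly two TSE configurations, to be read as \emph{true} or \emph{false}; and (ii) \emph{clause gadgets}, each a copy of the cyclic triangle of Example \ref{Example:TSE}, which on their own possess no TSE, but which can be stabilised precisely when one of their three agents is offered a strictly-better escape pairing by a literal-agent of a variable gadget set to satisfy the clause. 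Wiring the gadgets so that the only inter-gadget interactions are through such escape pairs, one argues that the global instance admits a TSE if and only if every clause receives at least one satisfying literal, i.e.\ iff the formula is satisfiable. Throughout, every feasible group is a singleton, an intra-gadget pair, an escape pair, or a clause triple, so group sizes never exceed three, which gives the stronger statement.

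For \textbf{Part (b)}, when $\mathcal{G}$ contains only singletons and pairs the CTG is exactly the \emph{stable roommates problem with incomplete lists and an outside option}: each player $i$ may remain alone (cost $c_i(\{i\})$) or pair with a feasible partner $j$ (cost $c_i(\{i,j\})$). I would turn the costs into a preference list for $i$ by ranking the feasible partners in increasing order of $c_i(\{i,j\})$ and truncating just before the single option, discarding every partner $j$ with $c_i(\{i,j\}) \ge c_i(\{i\})$ as unacceptable. Under this encoding, the singleton case of Eq. \eqref{Eqn:TSE} is individual rationality, and a blocking pair $\{i,j\}$ is exactly a mutually acceptable pair each preferring the other to their assignment; hence a matching is a TSE if and only if it is a stable matching of this roommates instance. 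Irving's algorithm, extended to incomplete lists, then decides existence and returns a stable matching, if one exists, in $O(n^2)$ time, which is polynomial.

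The main obstacle is the gadget engineering in Part (a): I must verify that each variable gadget has \emph{exactly} the two intended stable states and no spurious third one, that an unsatisfied clause gadget genuinely has no TSE once its escape pairs are removed, and — most delicately — that composing the gadgets creates no new blocking group straddling two gadgets, so that global stability decomposes cleanly into the intended local conditions, all while respecting the size-$3$ ceiling (so that no stabilising ``coalition'' of four or more players can ever form). A secondary technical point is ties in Part (b): since the blocking condition is strict, equal costs turn the instance into a roommates problem \emph{with ties}; I would dispatch this by assuming costs are in general position (strict preferences), the generic case to which Irving's algorithm applies directly.
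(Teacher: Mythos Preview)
Your Part (b) is correct and close in spirit to the paper, though the execution differs: the paper also reduces to the stable-roommate problem, but instead of using incomplete lists it \emph{doubles} the player set to $P \cup P'$, letting each $x'$ rank $x$ first so that ``$x$ matched with $x'$'' encodes ``$x$ travels alone''; the copies in $P'$ are placed on a circle so that any even subset of them has a stable matching among themselves. Your incomplete-list formulation is more direct and arguably cleaner; the doubling trick buys a reduction to the \emph{classical} (complete, even-sized) SRP rather than the incomplete-list variant. Your caveat about ties is honest but is a genuine loose end in both approaches, not only yours.

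Part (a) is where the gap lies. The paper does \emph{not} reduce from 3-SAT; it reduces from \textsc{3-Exact-Cover}, and the construction is strikingly simple: the players are $S \cup \{\alpha\}$ for one extra dummy $\alpha$, \emph{every} subset of size at most $3$ is feasible, and preferences are set so that (i) any triple coming from the cover family $\mathcal{B}$ is each member's top choice, (ii) singletons are worst, and (iii) all remaining groups are ranked lexicographically along a circle (exactly the device of Example \ref{Example:TSE}, but over the whole player set). The proof then argues, via a short case analysis on the residue $3s+1$, that a TSE must consist of the dummy alone plus triples from $\mathcal{B}$, i.e.\ an exact cover. There are no variable gadgets, no clause gadgets, and no inter-gadget wiring to verify.

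Your 3-SAT plan is plausible in outline, but as you yourself flag, you have not actually built the gadgets, and the three obstacles you list are precisely the places where such reductions tend to break. In particular, guaranteeing that a variable gadget has \emph{exactly} two TSE states while simultaneously supplying escape pairs to several clauses, without creating cross-gadget blocking pairs or triples, is nontrivial under the size-$3$ ceiling. So what you have is a credible strategy, not a proof; the paper's \textsc{3-Exact-Cover} route sidesteps all of this and is the argument you should supply.
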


\begin{proof}
We first argue that this problem is NP, by noting that it is possible to verify in polynomial time if a given profile of strategies is a TSE. Indeed, we need to determine if Eq. \eqref{Eqn:TSE} is fulfilled, which is done by checking once for each $G \in \mathcal{G}$ if the respective conditions described by Eq. \eqref{Eqn:TSE} are met.

a) To prove that the problem in which groups can have 3 or fewer players is NP-Complete, we use a reduction from \textit{3-Exact-Cover}. The $k$-Exact-Cover is defined by a universe set $S$ whose size is a multiple of $k$, and a collection of subsets $\mathcal{B}=b_1,\ldots,b_Q$, with $|b_i|=k \: \forall i=1,\ldots,Q$. The question is if one can pick some of those subsets, such that they are all disjoint and cover the set $S$. The $k$-Exact-Cover is known to be NP-Complete for any $k\geq 3$ (\cite{johnson1979computers}).

Given an instance of 3-Exact-Cover, we build an instance of CTG as follows:
\begin{itemize}
    \item The set of players is $S^*=S \cup \{\alpha\}$, with $\alpha$ an additional artificial player.
    \item Each subset of $S^*$ whose size is not larger than 3 is a feasible group.
\end{itemize}

In order to define the costs, please note first that, for this proof, we might drop the need that if $H \subseteq G$, then $c(H) \leq c(G)$. In fact, if our cost scheme does not meet this condition, we can add a large quantity $D$ to $c_i(g)$ for each $i$ and $g$. By doing so, the equilibria analysis remains unchanged because all individual costs are raised by the same amount, and the cost of each group $g$ will have a part that is $D \cdot |g|$, such that if $D$ is large enough then this proportional part will outweigh any other differences between groups of different size. Moreover, we do not need to define the costs numerically, as it suffices to show for each user how does she rank the different groups that she belongs to:
\begin{itemize}
    \item Groups of size $3$ corresponding to some $b_i$ (i.e., that comes from the instance of the original $3$-cover problem) are equally ranked as the preferred groups for all their members.
    \item Groups of size $1$ are the less convenient ones for everybody.
    
\end{itemize}
To rank the remaining groups, i.e., groups of size 3 that do not correspond to any $b_i$, and pairs, we first introduce a notion of individual preferences for each user: We say that player $i$ prefers player $j_1$ over $j_2$ if and only if 

\begin{equation}\label{Eq:Circle}
    j_1 - i \text{ (mod $n$) } < j_2 - i \text{ (mod $n$) }
\end{equation}

And we denote this situation by $j_1 \prec_i j_2$. Eq. \eqref{Eq:Circle} can be interpreted as if all users were displayed in a circle, such that $i$ always prefers the first player that appears to her right. Therefore, $i+1$ (mod $n$) is the most preferred and $i-1$ (mod $n$) is the least preferred player for player $i$.

For player $i$, the remaining groups are sorted \textit{lexicographically} according to the relationship $\prec_i$. That is to say, consider two different groups $J=(i,j_1,j_2)$ and $K=(i,k_1,k_2)$, such that $j_1 \prec_i j_2$ and $k_1 \prec_i k_2$. Then, $i$ prefers $J$ over $K$ if and only if
\begin{equation}
    j_1 \prec_i k_1 \text{ or } \left( j_1=k_1 \text{ and } j_2 \prec_i k_2 \right)
\end{equation}

Intuitively, for player $i$ the most relevant aspect to evaluate the group that she belongs to is the closest of her co-players. If two groups present the same closest co-player, then the remaining co-player is relevant, following the same order. Such a first criterion is also the decisive one when there is a group of two users involved, i.e., if the same group $J$ is compared with the group $K'=(i,k)$, $i$ prefers $J$ if and only if

\begin{equation}
    j_1 \prec_i k \text{ or } j_1=k
\end{equation}

Finally, if $i$ has to choose between $(i,j)$ and $(i,k)$, she will chose $(i,j)$ if and only if $j \prec_i k$. The details that prove that this reduction works can be found in the appendix.

b) To prove that the problem of determining whether a TSE exists is polynomial when all the groups have 1 or 2 users, we show that such a problem is equivalent to the well-known \textit{stable-roommate problem}, which can be solved in polynomial time (\cite{gusfield1988structure}). The details proving such an equivalence can be found in the appendix.
\end{proof}

It is noteworthy that determining whether a general game has a TSE can be harder than being NP-Complete (\cite{gottlob2005pure}), because verifying that a given profile is a TSE is not as simple as here. The main difference is that in CTG we discard all the non-coordinated profiles of strategies. 

\color{black}

%If a given profile of strategies is not a TSE, we can measure `how far it is', by constructing an index that can be then used as a KPI to evaluate the quality of different matching. Let us define the set $\mathcal{H}(G_i: i \in P) \subseteq \mathcal{G}$ containing all the groups that are not in the matching but could be created if players can coordinate freely, i.e.:

%\begin{equation}
%    \mathcal{H}(G_i: i \in P)= \{G \in \mathcal{G}: \forall i \in P, G \neq G_i, \text{ and } \forall j \in G, c_j(G)<c_j(G_j)\}
%\end{equation}

%We define the TSE-level of a matching, as the relative size of the complement of $ \mathcal{H}(G_i: i \in P)$, i.e. 
%\begin{equation}
 %   TSE(G_i: i \in P) = 1 - \frac{| \mathcal{H}(G_i: i \in P)|}{|\mathcal{G}|}
%\end{equation}

%Therefore, a matching is a TSE if and only if its TSE level is 1. Note that the TSE level can be trivially calculated in polynomial time, just by checking for each $G$ not in the profile of strategies if it's players would be better-off by abandoning their current groups to form $G$.

The reason why such a notion of equilibrium is not appropriate for this game is that it cannot take place in ridepooling nor ridesharing systems. If hundreds or thousands of users are involved, it is impossible that all of them are able to coordinate just to perform a trip. That is, involving every possible coordination to define stability is too strong.

\subsection{Alternative notions of equilibrium} \label{SCN:RE}

The two most traditional notions of equilibrium are not suitable for this model. TNE is too weak, and TSE is too strong. Therefore, we now propose three alternative definitions for equilibrium in the ridepooling context. All of them lie in-between the two previous ones, permitting some level of joint decisions among the users.

\subsubsection{Ridepooling Hermetic Equilibrium (RHE)}
The first alternative notion of equilibrium recognises that a user could coordinate with those other users that are already related to her: the ones with whom she is currently sharing.

Consider $G \in \mathcal{G}$ and $H \subsetneq G$. We say that $H$ wants to leave $G$ if $\forall i \in H, c_i(H) < c_i(G)$, and $G$ is \textit{hermetic} if no such an $H$ exists. Formally, $G$ is hermetic if and only if:
\begin{equation}
 \forall H \subsetneq G, \exists i \in H \text{ such that }c_i(G)\leq c_i(H)   
\end{equation} 
This idea permits coordination between users, but only when they are in the same group. Verifying if a group is hermetic can be done by testing for all its proper subsets if they want to leave or not.
    
\begin{definition}
A matching $(G_i)_{i \in P}$ is a \textit{hermetic ridepooling equilibrium} if every $G_i$ is hermetic.
\end{definition}

It is evident that having each traveller by herself is a RHE. However, it is not necessarily unique. To find the PoS and PoA, we constraint Problem \ref{Eqn:Opt} in a similar way as for TNE, but with tighter restrictions. Let us define $\mathcal{G}_H$ as the set of hermetic groups, which is found just by pruning the non-hermetic groups. The PoS and PoA are found solving problems \ref{Eqn:BPoAHRE}-\ref{Eqn:WPoAHRE}, that are analogous to \ref{Eqn:BPoATNE}-\ref{Eqn:WPoATNE},  constraining the set of groups to $\mathcal{G}_H$.

\begin{align}
\label{Eqn:BPoAHRE}
    \min_{x_G \in \{0,1\}} & \sum_{G \in \mathcal{G}_{H}} x_G c(G) \\
\nonumber
    \text{s.t. } & \sum_{G: i \in G} x_G = 1 & \forall i \in P 
\end{align}

\begin{align}
\label{Eqn:WPoAHRE}
    \max_{x_G \in \{0,1\}} & \sum_{G \in \mathcal{G}_{H}} x_G c(G) \\
\nonumber
    \text{s.t. } & \sum_{G: i \in G} x_G = 1 & \forall i \in P 
\end{align}

\subsubsection{Ridepooling Unmergeable Equilibrium (RUE)}
We now propose an alternative notion of equilibrium that admits some trivial Pareto improvements: merging groups when it is better for everybody. Allowing for these simple movements prevents to have as an equilibrium the matching in which everybody travels alone.

We say that two disjoint groups $G_1, G_2$ are \textit{mergeable} if $G_1 \cup G_2 \in \mathcal{G}$ and 
\begin{equation}\label{Eq:Mergeable}
\forall i \in G_1 \cup G_2, \: c_i(G_1 \cup G_2) \leq c_i(G_i), \text{ and } \exists i \in G_1 \cup G_2 \text{ such that } c_i(G_1 \cup G_2) < c_i(G_i)    
\end{equation}
This represents a way in which users from two different vehicles can coordinate, but only if everyone benefits from doing so. As this is a Pareto improvement (i.e. some players get better-off while nobody loses), these mergers might be suggested for an external controller.

\begin{definition}
A matching $(G_i)_{i \in P}$ is a \textit{ridepooling unmergeable equilibrium} if it is a TNE and no two different groups $G_i, G_j$ are mergeable.
\end{definition}

Such an equilibrium always exists, as shown in Theorem \ref{lemma:URE}.

\begin{theorem} \label{lemma:URE}
Any instance of CTG admits a RUE.
\end{theorem}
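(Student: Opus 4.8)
The plan is to prove existence constructively, by exhibiting a simple greedy procedure that starts from a trivial equilibrium and only ever performs Pareto-improving merges. I would begin with the all-singletons matching $M_0=\{\{1\},\ldots,\{n\}\}$, which is feasible since $\{i\}\in\mathcal{G}$ for every $i$, and which is already a TNE: by the simplification of the TNE condition given earlier, the only relevant deviation is to travel alone, and here each player travels alone, so the defining inequality $c_i(G_i)\leq c_i(\{i\})$ holds trivially.

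The core of the argument is a monotonicity observation that I would establish first: \emph{merging preserves the TNE property}. Suppose the current matching is a TNE and $G_1,G_2$ are two of its (necessarily disjoint) groups that are mergeable, so that $G_1\cup G_2\in\mathcal{G}$ and $c_i(G_1\cup G_2)\leq c_i(G_i)$ for every $i\in G_1\cup G_2$. Replacing $G_1,G_2$ by $G_1\cup G_2$ leaves every other group untouched, and for each member of the new group one has $c_i(G_1\cup G_2)\leq c_i(G_i)\leq c_i(\{i\})$, so the TNE inequality is inherited. Since the TNE condition is verified group by group and only the merged group has changed, the new matching is again a TNE.

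With this in hand I would define the algorithm: while the current matching contains a mergeable pair, merge it; otherwise stop. Each merge reduces the number of groups by exactly one, so starting from $n$ groups the loop halts after at most $n-1$ steps, which gives termination immediately. (Equivalently, the potential $\Phi=\sum_{i\in P}c_i(G_i)$ strictly decreases at every merge, by the strict-improvement clause in the definition of \emph{mergeable}, so no matching can recur.) Upon termination the matching is still a TNE by the monotonicity step, and by the stopping condition it contains no mergeable pair; these are precisely the two requirements in the definition of a RUE, which completes the proof.

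The main point to get right — rather than a genuine obstacle — is the realisation that starting from the singletons and performing \emph{only} merges never forces us to violate the TNE condition, so that no ``leave-the-group'' deviations are ever needed and the procedure stays within the class of TNE matchings throughout. Everything else (feasibility of the merged group, disjointness of the groups in a matching, and termination) follows directly from the definitions; the only step demanding a small check is that a merge cannot damage the TNE property of the groups it does \emph{not} involve, and this is immediate because those groups are literally unchanged.
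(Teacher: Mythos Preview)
Your proposal is correct and follows essentially the same approach as the paper: start from the all-singletons matching and repeatedly merge any mergeable pair until none remain, arguing that the process terminates (you use the group-count decrease, the paper uses increasing group sizes) and that the TNE property is preserved throughout because each merge is a Pareto improvement over travelling alone. Your write-up is in fact slightly more careful than the paper's, which gives the same algorithm and the same two bullet-point justifications without spelling out the inequality chain $c_i(G_1\cup G_2)\leq c_i(G_i)\leq c_i(\{i\})$.
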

\begin{proof}
In the appendix.
\end{proof}

As this equilibrium condition does not depend on single groups but rather on the chance of merging them, we cannot just restrict $\mathcal{G}$ to find PoS and PoA. Instead we first identify all the pairs of groups that are mergeable, and then use an extra constraint to prevent that they both occur simultaneously. In Algorithm \ref{Alg:Mergeable} we build the set $\mathcal{M}$ of mergeable groups, i.e., each element of $\mathcal{M}$ is a duple formed by two groups that are mergeable.

\begin{algorithm}
\begin{algorithmic}
\caption{Construction of $\mathcal{M}$.}\label{Alg:Mergeable}
\STATE{$\mathcal{M}=\emptyset$};
\FORALL{$G_1,G_2 \in \mathcal{G_{TNE}}$}
\IF{$G_1\cap G_2 = \emptyset$ and $G_1,G_2$ are mergeable}
\STATE{$\mathcal{M} \leftarrow \mathcal{M} \cup (G_1,G_2)$};
\ENDIF
\ENDFOR
\STATE{Output $\mathcal{M}$};
\end{algorithmic}
\end{algorithm}

If $(G_1,G_2) \in \mathcal{M}$, both occurring simultaneously would mean that the matching is not a RUE. Therefore, we prevent that from happening in the ILP by adding extra constraints ensuring that the respective binary variables cannot both take a value of $1$. This is shown in the inequalities represented by the last lines in Eqs. \ref{Eqn:BPoAURE}-\ref{Eqn:WPoAURE}, that find the PoS and PoA for this type of equilibrium.

\begin{align}
\label{Eqn:BPoAURE}
    \min_{x_G \in \{0,1\}} & \sum_{G \in \mathcal{G_{TNE}}} x_G c(G) \\
\nonumber
    \text{s.t. } & \sum_{G: i \in G} x_G = 1 & \forall i \in P  \\
\nonumber
    \: & x_{G_1}+x_{G_2} \leq 1 & \forall (G_1,G_2) \in \mathcal{M} 
\end{align}

\begin{align}
\label{Eqn:WPoAURE}
    \max_{x_G \in \{0,1\}} & \sum_{G \in \mathcal{G_{TNE}}} x_G c(G) \\
\nonumber
    \text{s.t. } & \sum_{G: i \in G} x_G = 1 & \forall i \in P \\
\nonumber
    \: & x_{G_1}+x_{G_2} \leq 1 & \forall (G_1,G_2) \in \mathcal{M} 
\end{align}

\subsubsection{Ridepooling Semi-individual Equilibrium(RSIE)}
One of the reasons why TNE recognises everyone travelling alone as an equilibrium, is that individual (unilateral) movements are very restrictive in the formal definition of CTG. A natural extension for the idea of individual movements is that a single player switches group, with the players from the receiving group having the chance of accepting her if everybody improves or stays the same. With this idea in mind, we can provide a formal definition of an equilibrium that is not fully, yet nearly, individual (because of the users that have to accept the moving player):

Let $G_1, G_2 \in \mathcal{G}$ and disjoint. We say that $G_1,G_2$ are \textit{individually unstable} if $\exists i \in G_1$ such that $G_2 \cup \{i\} \in \mathcal{G}$ and the two following conditions are fulfilled:

\begin{enumerate}
    \item $c_i(G_2 \cup \{i\})< c_i(G_1)$
    \item $\forall j \in G_2, c_j(G_2 \cup \{i\}) \leq c_j(G_2)$
\end{enumerate}

The first condition entails that $i$ wants to move from $G_1$ to $G_2$, while the second condition ensures that everyone in $G_2$ is willing to accept $i$. This equilibrium can be seen as the one expected when the service is offered through an app, such that each user can take only two actions: i) leaving her current group (as for the TNE), and ii) joining a new group; the latter shall take place only if all users of the mentioned new group accept her.

\begin{definition}
A matching $(G_i)_{i \in P}$ is a ridepooling semi-individual equilibrium if no $G_1, G_2$ are individually unstable.
\end{definition}

The same counter-example \ref{Example:TSE} that revealed that there are instances of CTG with no TSE  is also a scheme with no RSIE. 

\begin{theorem}
Given an instance of CTG, determining whether a RSIE exists is NP-Complete. This is true even if groups' sizes cannot be larger than 3.
\end{theorem}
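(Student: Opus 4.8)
The plan is to follow the template of Theorem~\ref{Thm:TSENPComplete}(a): show the problem lies in NP and then reduce 3-Exact-Cover to it. Membership in NP is handled as for TSE: given a candidate matching, I would verify that no pair of its groups is individually unstable by testing, for every ordered pair of groups $(G_1,G_2)$ present in the matching and every player $i\in G_1$, whether $G_2\cup\{i\}\in\mathcal{G}$ and whether the two defining inequalities $c_i(G_2\cup\{i\})<c_i(G_1)$ and $c_j(G_2\cup\{i\})\le c_j(G_2)$ for all $j\in G_2$ hold. There are only polynomially many such triples $(G_1,G_2,i)$, and each test is a comparison of precomputed costs, so verification is polynomial.

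For NP-hardness I would reuse the exact construction built for the TSE reduction: the players are $S\cup\{\alpha\}$, every subset of size at most $3$ is feasible, the triples $b_i$ of the 3-Exact-Cover instance are the commonly most-preferred groups, singletons are least preferred, and the remaining groups are ranked by the circular lexicographic order $\prec_i$ of Eq.~\eqref{Eq:Circle}. The target equivalence to prove is that this instance admits a RSIE if and only if the 3-Exact-Cover instance is solvable.

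The forward implication is the easy one. Assuming a cover exists, I take the matching formed by the corresponding disjoint triples $b_i$ together with $\alpha$ travelling alone. Each member of a $b_i$ sits in her globally most-preferred group, so condition~1 of individual instability (a strict improvement for the mover) can never be satisfied by any such player; and while $\alpha$ would like to join some triple, condition~2 fails, since a player already in her top group $b_i$ is made strictly worse off by admitting $\alpha$ and therefore refuses consent. Hence no individually unstable pair exists and the matching is a RSIE.

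The converse is where I expect the main difficulty. Because RSIE allows only single-player relocations (subject to the destination's consent) rather than the arbitrary coalitional deviations of TSE, the family of RSIE matchings strictly contains the family of TSE matchings, so I cannot borrow the TSE argument and must show that the weaker deviation already destabilises every matching when no exact cover exists. The guiding observation, already exhibited by Example~\ref{Example:TSE}, is that the circular structure of $\prec_i$ converts any non-cover configuration into a cycle of profitable single-player moves: a player not placed in a $b_i$ triple can relocate toward the first co-traveller to her right, and the receiving group consents because the lexicographic ranking is arranged precisely so that admitting that player does not increase anyone's cost. Turning this intuition into a complete proof requires an exhaustive case analysis over the possible shapes of a matching --- those containing non-$b_i$ triples, those mixing pairs and singletons, and the placement of $\alpha$ --- exhibiting in each case an explicit $(i,G_1,G_2)$ meeting conditions~1 and~2. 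As in the TSE proof, I would defer this verification to the appendix, and it is the technical heart of the reduction.
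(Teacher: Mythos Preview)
Your reduction and both directions are correct, and the construction is exactly the one the paper uses. However, you overcomplicate the converse by asserting that you ``cannot borrow the TSE argument''. In fact you can, and this is precisely the paper's one-line proof: inspect the three deviations used in Lemma~\ref{Lemma2} and notice that each one is already a RSIE deviation. In case~(i) player $x$ leaves the non-$\mathcal{B}$ triple to join the singleton $\{w\}$, and $w$ strictly gains; in case~(ii) player $x$ leaves the pair $\{x,y\}$ to join the pair $\{w,z\}$, and both $w$ and $z$ strictly gain by the lexicographic rule that a pair always prefers becoming a triple; in case~(iii) one singleton joins the other. Thus Lemma~\ref{Lemma2}, and hence Lemma~\ref{Lemma1}, hold for RSIE without any new case analysis, and the ``technical heart'' you plan to defer to the appendix is already written. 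The forward direction is also immediate once you observe that every TSE is a RSIE, so the exact-cover matching---shown to be a TSE---is automatically a RSIE.
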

\begin{proof}
The same arguments used to prove Theorem \ref{Thm:TSENPComplete} a) are valid for this theorem, as it involved only the kind of changes admitted by this equilibrium notion.
\end{proof}

%We will show in section \ref{scn:pricing} that a quite natural pricing scheme makes the optimal solution a RSIE, i.e., if this notion of equilibrium can effectively represent what happens in reality, then it might be possible to induce an optimal grouping.

Similar to RUE, finding the PoS and PoA under this notion of equilibrium requires adding some constraints to the ILP. These constraints might lead to an empty set, if no RSIE exists. We first need to identify the pairs of sets that are individually unstable, which are kept in the set $\mathcal{S}$ built in Algorithm \ref{Alg:Unstable}. Here we assume explicitly that $\emptyset \in \mathcal{G}$, entailing that the RSIE is also a TNE, when taking $G_2=\emptyset$ in Algorithm \ref{Alg:Unstable}.

\begin{algorithm}[H]
\begin{algorithmic}
\caption{Construction of $\mathcal{S}$.}\label{Alg:Unstable}
\STATE{$\mathcal{S}=\emptyset$};
\FORALL{$G_1,G_2 \in \mathcal{G}$}
\IF{$G_1\cap G_2 = \emptyset$ and $G_1,G_2$ are individually unstable}
\STATE{$\mathcal{S} \leftarrow \mathcal{S} \cup (G_1,G_2)$};
\ENDIF
\ENDFOR
\STATE{Output $\mathcal{S}$};
\end{algorithmic}
\end{algorithm}

Once the set $\mathcal{S}$ has been computed, the PoS and PoA for this type of equilibrium are obtained solving problems \ref{Eqn:BPoASIRE}-\ref{Eqn:WPoASIRE}.

\begin{align}
\label{Eqn:BPoASIRE}
    \min_{x_G \in \{0,1\}} & \sum_{G \in \mathcal{G}} x_G c(G) \\
\nonumber
    \text{s.t. } & \sum_{G: i \in G} x_G = 1 & \forall i \in P  \\
\nonumber
    \: & x_{G_1}+x_{G_2} \leq 1 & \forall (G_1,G_2) \in \mathcal{S} 
\end{align}

\begin{align}
\label{Eqn:WPoASIRE}
    \max_{x_G \in \{0,1\}} & \sum_{G \in \mathcal{G}} x_G c(G) \\
\nonumber
    \text{s.t. } & \sum_{G: i \in G} x_G = 1 & \forall i \in P \\
\nonumber
    \: & x_{G_1}+x_{G_2} \leq 1 & \forall (G_1,G_2) \in \mathcal{S} 
\end{align}

\subsection{Synthesis and positioning of the different notions of equilibria}

\begin{figure}[H]
    \centering
    \includegraphics[width=55mm]{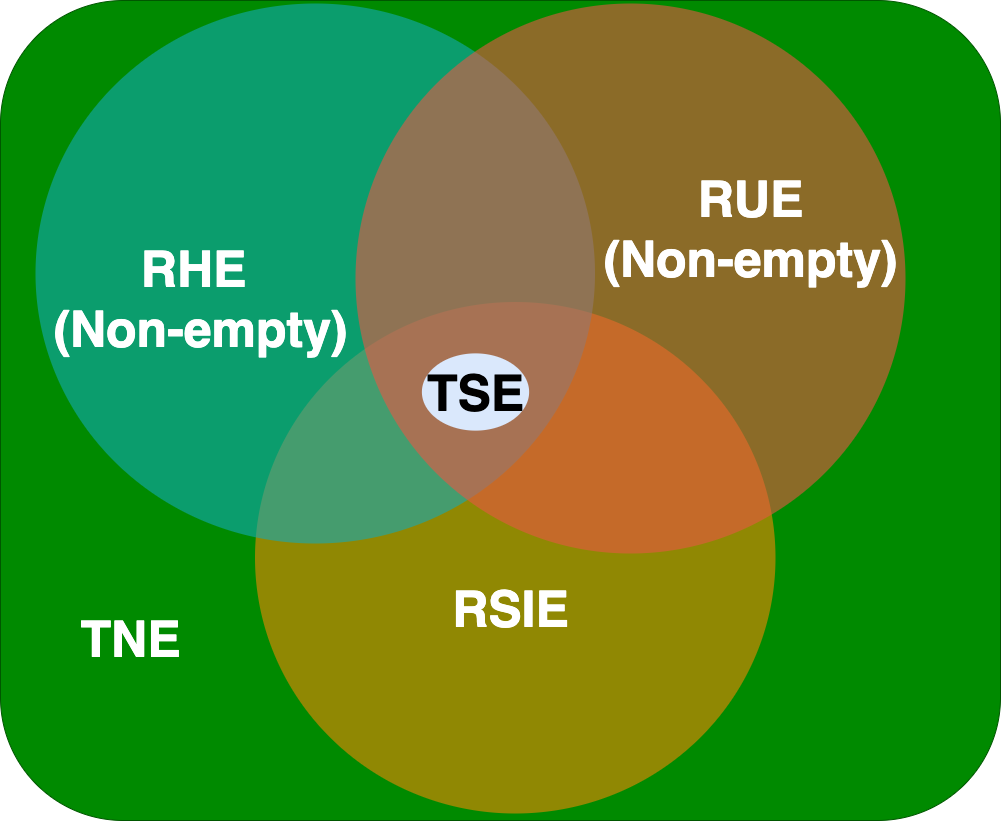}
    \caption{Venn diagram of the five notions of equilibria. TNE, RHE and RUE are never empty, whereas RSIE and TSE might be empty.}
    \label{fig:Equilibria}
\end{figure}

We have shown that TNE is too weak as an equilibrium notion (i.e., it is too easy to fulfill) and that TSE is too strong. We have therefore proposed three extra definitions for equilibria that lie in-between, depicted in Figure \ref{fig:Equilibria} together with TNE and TSE. These three alternative equilibria represent different types of coordination among users: RHE admits full coordination between users within the same group, RUE admits a simple type of Pareto improvements, and RSIE accepts individuals moving from one group to another, if the users from the latter are willing to accept her.

%We have proposed five definitions of equilibria, depicted in Figure \ref{fig:Equilibria}, where we show the Venn diagram: TNE is the weakest notion (i.e., that is easiest to fulfill), TSE is the strongest notion, and the other three equilibria lie in-between, representing different types of coordination among users: RHE admits full coordination between users within the same group, RUE admits a simple type of Pareto improvements, and RSIE accepts individuals moving from one group to another, if the users from the latter are willing to accept her.

We have shown that there is always at least one RHE and one RUE, entailing that the set of TNE cannot be empty. The TSE and RSIE sets, on the other hand, might be empty and it is an NP-Complete problem to determine if that is the case.

It is noteworthy that the intersections among the groups create new types of equilibria, in which two (or three) of the equilibria conditions are held simultaneously. For the sake of simplicity, we are not studying those intersections explicitly. Note that any of those intersection requiring RSIE might be empty. Moreover, there might also be no matching that satisfies both RHE and RUE: an example (with five players) for this is provided in the Appendix; however, one of the cost-sharing protocols we study in following section does make optimal solutions both a RHE and RUE.

These ideas can be applied to study which equilibria could emerge in the illustrative examples (from section \ref{scn:example}), if the groups' costs are split according to the first alternative in Table \ref{tab:ExampleIntro}:
\begin{itemize}
    \item The TNE notion does not discard any group, i.e., any matching would be feasible, which reinforces that this notion is not appropriate.
    \item The RHE notion would discard the group $\{1,2,3\}$. In particular, the optimal solution is a RHE.
    \item The RUE notion discards any pair of individuals travelling alone (because they could merge), and nothing else. In particular, the optimal solution is a RUE.
    \item The RSIE notion discards the same pairs, but also any combination of a pair and the remaining user by herself. The optimal solution is no longer a RSIE, but everybody travelling in a single group is.
    \item There is no TSE.
\end{itemize}

\section{Cost-sharing protocols to induce good equilibria} \label{scn:pricing}
So far we have assumed that $c_i(G)$ is exogenous to the problem. However, from a policy point of view, one is interested not only in \textit{predicting} which equilibria might emerge, but also in \textit{steering towards} the best possible equilibria. Therefore, the controller of the system can determine the fares for each user (recall that this is equivalent to determining $c_i(G)$ for every $i$ and $G$), in order to induce efficient outcomes. 

From now on, we assume that only $c(G)$ (the total cost of the group) is given for each $G$, which results, for instance, after optimising the route that the vehicle must follow to serve the whole group. We aim to find a so-called \textit{cost-sharing protocol}, i.e., to define how to split the costs amongst the players in $G$ by defining the respective individual costs $c_i(G)$. Following \cite{christodoulou2020resource}, we will say that such protocols are:
\begin{itemize}
    \item \textbf{Budget balanced} if they cover exactly the cost of the group, i.e.
    \begin{equation}
        \forall G \in \mathcal{G}, \sum_i c_i(G) = c(G)
    \end{equation}
    \item \textbf{Overcharging} if they are equal or in excess of the cost of the group, i.e.
     \begin{equation}
        \forall G \in \mathcal{G}, \sum_i c_i(G) \geq c(G)
    \end{equation}

\end{itemize}

    Note that here we are not dealing with the profit of the system. The idea of an overcharging protocol is that the set of individual costs can effectively push users to decide on groups that are efficient from a global point of view.

As we aim at proposing pricing schemes that are understandable by the users, we will consider only \textit{oblivious} protocols\footnote{As discussed in Section \ref{scn:relatedworks}, using oblivious protocols precludes us from using a cooperative game theory approach.} (as defined by \cite{christodoulou2017cost}), meaning that $c_i(G)$ depends only on $i$ and $G$, and not on the other feasible and selected groups in $\mathcal{G}$. To be precise, an oblivious protocol in this context defines $c_i(G)$ as a function of the vector $\left(c(H)\right)_{H \subseteq G}$. Moreover, for the different protocols we propose, all such functions can be computed in polynomial time.

%``If your laptop cannot find it, neither can the market'' is a well-knowm quote by Kamal Jain (\cite{papadimitriou2007complexity}), arguing that the mere existence of an equilibrium might not be relevant if there is no algorithm to achieve it. Although controversial (and probably untrue in many circumstances), it does highlight that finding algorithms that output an equilibrium is a relevant task, that complements their definition and description.

%With that in mind, we also propose some  types of policies to control the system other than pricing. Concretely, we propose some sequential rules, in which users are arranged in some arbitrary order and decide on the groups formed before them (or starting a new one).

In this section, for each of proposed  equilibria (RUE, RHE or RSIE) we propose a cost-sharing protocol that make any optimal matching an equlibrium (i.e., PoS$=1$). Before introducing the different protocols, it is useful to show and prove the following intuitive Lemma.

%In the case of TSE, we propose a budget-balanced protocol that allows finding a TSE in polynomial time, that guarantees a PoS $\leq \log(n+1)$.
%\begin{itemize}
%    \item We propose a budget-balanced protocol, with which we are able to find a TSE in polynomial time that guarantess a PoA $\leq \log(n+1)$.
%    \item We propose an overcharging protocol, that makes the optimal solution (Eq. \ref{Eqn:Opt}) a RSIE.
%    \item We propose alternative intuitive buget-balanced protocols, that make the optimal solution a RUE.
%\end{itemize}

\begin{lemma} \label{Lemma:BudgetBalancedRUE}
Let $\{c_i(G): i \in P, G \in \mathcal{G}\}$ be a budget-balanced cost-sharing protocol. Then any optimal matching does not contain pairs of mergeable groups.
\end{lemma}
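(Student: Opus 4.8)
The plan is to argue by contradiction using a simple exchange argument. Suppose an optimal matching $(G_i)_{i \in P}$ contains two groups $G_1$ and $G_2$ that are mergeable. By the definition of mergeable (Eq.~\eqref{Eq:Mergeable}), the merged group $G_1 \cup G_2$ is feasible, and every member of $G_1 \cup G_2$ weakly prefers the merged group to her current one, i.e. $c_i(G_1 \cup G_2) \leq c_i(G_i)$ for all $i \in G_1 \cup G_2$. I would then produce a new matching $(G_i')$ that is identical to the original one except that the two groups $G_1, G_2$ are replaced by the single group $G_1 \cup G_2$; this is still a valid matching, since merging two disjoint groups preserves the property that each passenger belongs to exactly one group.

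The core of the argument is to compare the total cost of the two matchings, and this is where the budget-balanced hypothesis does the work. First I would sum the mergeability inequalities over all members of $G_1 \cup G_2$ to get
\begin{equation*}
\sum_{i \in G_1 \cup G_2} c_i(G_1 \cup G_2) \;\leq\; \sum_{i \in G_1} c_i(G_1) + \sum_{i \in G_2} c_i(G_2).
\end{equation*}
Next I would invoke budget-balancedness three times, once for each group, to replace each inner sum of individual costs by the corresponding total group cost: the left side equals $c(G_1 \cup G_2)$, while the right side equals $c(G_1) + c(G_2)$. This yields $c(G_1 \cup G_2) \leq c(G_1) + c(G_2)$, which says precisely that the merged matching has total cost no larger than the original.

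To turn this weak inequality into a genuine contradiction with optimality, I would exploit the strict part of the mergeability condition: there exists some $i^* \in G_1 \cup G_2$ with $c_{i^*}(G_1 \cup G_2) < c_{i^*}(G_{i^*})$. Carrying this strict inequality through the summation makes the displayed inequality strict, giving $c(G_1 \cup G_2) < c(G_1) + c(G_2)$. Hence the merged matching is strictly cheaper, contradicting the optimality of $(G_i)_{i \in P}$. I expect the main (though still mild) obstacle to be bookkeeping: making sure that the comparison of total matching costs cleanly isolates the three affected groups, since all unaffected groups contribute identically to both matchings and cancel, so that the strict gap on the merged block is not washed out. Everything else is routine once budget-balancedness is applied term by term.
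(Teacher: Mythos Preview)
Your argument is correct and follows exactly the same route as the paper: summing the mergeability inequalities, using budget-balancedness to turn individual sums into group costs, and invoking the strict part of Eq.~\eqref{Eq:Mergeable} to get $c(G_1 \cup G_2) < c(G_1) + c(G_2)$, which contradicts optimality. The paper's proof is just a terser statement of the same exchange argument.
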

\begin{proof}
Consider $G_1$ and $G_2$ part of an optimal matching. If $G_1$ and $G_2$ were mergeable (Eq. \ref{Eq:Mergeable}), then the total cost of $G_1 \cup G_2$ would be strictly lower than $c(G_1)+c(G_2)$, which contradicts the optimality of the matching.
\end{proof}

We now proceed to introduce the different protocols.

\subsection{Externality-based protocol}
There is a vast literature that studies how incorporating externalities into pricing might induce optimal equilibria. The so-called Pigouvian taxes (\cite{cremer1998externalities}) for markets regulation, or the VCG-mechanisms for auction designs (\cite{nisan2007computationally}) are some of the most relevant examples. In CTG, however, doing so presents some difficulties:
\begin{itemize}
    \item The externalities induced to the whole system depend on the complete set of passengers and groups, which would require violating the principle of proposing only oblivious protocols.
    \item We have defined several notions of equilibria, so making an optimal matching a TNE, instead of a stronger notion, is not good enough.
\end{itemize}
Notwithstanding the above, we now show how to define costs according to some of the externalities (namely those induced to the members of the same group), using an oblivious protocol and making any optimal matching a RSIE. However, such a protocol will not be budget balanced.

\begin{definition}
Let $G$ be a group. The \textbf{externality-based} protocol charges to each $i$ in $G$ the extra cost that its inclusion induces to the group, i.e.

\begin{equation} \label{Eq:Externality}
c_i(G)=c(G) - c(G \setminus \{i\})    
\end{equation}
\end{definition}

\begin{theorem} \label{thm:Externalities}
If the externality-based protocol is used, then every optimal matching is a RSIE.
\end{theorem}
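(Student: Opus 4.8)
The plan is to show directly that an optimal matching admits no individually unstable pair of groups when costs are assigned via the externality-based protocol. Recall the definition of the protocol, $c_i(G)=c(G)-c(G\setminus\{i\})$, and the definition of an individually unstable pair $(G_1,G_2)$: there is a player $i\in G_1$ who wants to move to $G_2$ (condition 1), and every $j\in G_2$ is willing to accept her (condition 2). I would argue by contradiction: suppose $(G_i)_{i\in P}$ is optimal but contains two groups $G_1=G_i$ and $G_2=G_j$ that are individually unstable, witnessed by a player $i\in G_1$ whose move to $G_2\cup\{i\}$ makes her strictly better off and makes no member of $G_2$ worse off.

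First I would translate the two instability conditions into statements about the total costs $c(\cdot)$ by substituting the externality formula. Condition 1 gives $c_i(G_2\cup\{i\})<c_i(G_1)$, i.e. $c(G_2\cup\{i\})-c(G_2)<c(G_1)-c(G_1\setminus\{i\})$. Condition 2, for each $j\in G_2$, gives $c_j(G_2\cup\{i\})\le c_j(G_2)$, i.e. $c(G_2\cup\{i\})-c(G_2\cup\{i\}\setminus\{j\})\le c(G_2)-c(G_2\setminus\{j\})$. The key step is to combine these local inequalities into a statement comparing the total cost of the current matching against the total cost of the matching obtained by moving $i$ from $G_1$ to $G_2$, namely the configuration with groups $G_1\setminus\{i\}$ and $G_2\cup\{i\}$ in place of $G_1$ and $G_2$. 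The change in total system cost from this move is exactly
\begin{equation}
\Delta = \bigl[c(G_1\setminus\{i\})+c(G_2\cup\{i\})\bigr]-\bigl[c(G_1)+c(G_2)\bigr],
\end{equation}
and optimality of the original matching forces $\Delta\ge 0$. Rearranging condition 1 directly yields $c(G_2\cup\{i\})-c(G_2)<c(G_1)-c(G_1\setminus\{i\})$, which is precisely $\Delta<0$, the desired contradiction.

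The cleanest route is therefore to observe that condition 1 alone already encodes $\Delta<0$: the externality charges are telescoping, so the marginal cost of inserting $i$ into $G_2$ equals $c(G_2\cup\{i\})-c(G_2)$ and the marginal cost of removing $i$ from $G_1$ equals $c(G_1)-c(G_1\setminus\{i\})$, and condition 1 says the former is strictly smaller. Since moving $i$ to $G_2$ yields a matching (the constraint in Eq. \eqref{Eqn:Opt} is preserved: $i$ still belongs to exactly one group, as do all others) with strictly smaller total cost, this contradicts optimality. I expect the main subtlety — the part worth stating carefully rather than the routine algebra — to be the feasibility check: one must confirm that $G_2\cup\{i\}\in\mathcal{G}$ (guaranteed by the definition of individual instability) and that $G_1\setminus\{i\}\in\mathcal{G}$ (guaranteed by the downward-closure assumption $H\subseteq G\Rightarrow H\in\mathcal{G}$), so that the perturbed selection is genuinely a feasible matching eligible to compete with the optimum. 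Interestingly, condition 2 is not even needed for the argument, since condition 1 by itself produces a strictly cheaper feasible matching; I would note this, as it shows the externality protocol in fact rules out the stronger move where a single player relocates regardless of the receiving group's consent, and hence certainly rules out individually unstable pairs.
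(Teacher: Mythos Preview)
Your argument is correct and matches the paper's own proof almost exactly: both observe that, under the externality-based protocol, the change in player $i$'s cost when moving from $G_1$ to $G_2$ equals the change in total system cost $\Delta=[c(G_1\setminus\{i\})+c(G_2\cup\{i\})]-[c(G_1)+c(G_2)]$, which is nonnegative by optimality, so condition~1 of individual instability cannot hold. Your write-up is slightly more careful than the paper's in explicitly checking feasibility of $G_1\setminus\{i\}$ via downward closure and in noting that condition~2 is never invoked; the paper leaves both points implicit.
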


\begin{proof}
Let $(G_i)_{i \in P}$ be an optimal set of groups, and take player $i$ and group $G_j \neq G_i$ such that $G_j \cup \{i\} \in \mathcal{G}$. We will prove that $i$ does not want to move to $G_j$. If $i$ moves to $G_j$, her new costs would be:

\begin{equation} \label{Eq:Aux1}
    c_i(G_j \cup \{i\}) = c(G_j \cup \{i\}) - c(G_j)
\end{equation}

The difference in costs for $i$ is found by comparing Eqs. \ref{Eq:Externality} and \ref{Eq:Aux1}:

\begin{equation}
   c(G_j \cup \{i\}) - c(G_j) - \left[ c(G_i) - c(G_i \setminus \{i\}) \right] = c(G_j \cup \{i\}) + c(G_i \setminus \{i\}) - \left[c(G_j) + c(G_i) \right]
\end{equation}

The last expression corresponds exactly to the total changes in the global costs of the system, which are positive because $(G_i)_{i \in P}$ is optimal. Hence, player $i$ would face positive extra costs if moving to any other group. 

\end{proof}

Clearly, this protocol is not budget-balanced. In the presence of economies of scale, represented by sub-modular functions (i.e., in which adding an extra passenger is cheaper when the group is larger), the marginal individual costs $c_i(G)$ might become very low for large groups $G$, thus being far from sufficient to cover the real group costs. Note that this low individual costs represent effectively that larger groups should be prioritised when there are economies of scale. Thus, the system would require subsidisation, which is the usual case whenever there are scale economies. If it is not possible to have subsidies, the problem is partially solved with Corollary \ref{cor:Overcharging}, that shows that the externality-based split protocol can be adapted to be overcharging. However, it is no longer oblivious, but \emph{resource-aware}, meaning that the costs within a group $G$ might depend on all the possible costs $c(H)$ for $H \in \mathcal{G}$ (regardless of the actual groups that are chosen by passengers not in $G$). In practice, the only not-oblivious information that is used to make the protocol budget-balanced is $\max_{H \in \mathcal{G}}c(H)$.

\begin{corollary} \label{cor:Overcharging}
There is an overcharging resource-aware protocol, that makes every optimal matching a RSIE. The only not-oblivious information required is $\max_{H \in \mathcal{G}}c(H)$.
\end{corollary}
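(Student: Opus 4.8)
The plan is to reuse the externality-based rule and correct only its budget, by adding a uniform per-passenger surcharge that is large enough to guarantee cost recovery yet invisible to the incentive comparisons that govern RSIE. Concretely, writing $M=\max_{H\in\mathcal{G}}c(H)$, I would define
\begin{equation*}
\hat{c}_i(G)=\bigl(c(G)-c(G\setminus\{i\})\bigr)+M .
\end{equation*}
The bracketed externality term depends only on the vector $\left(c(H)\right)_{H\subseteq G}$ and is therefore oblivious; the sole non-oblivious ingredient is the single scalar $M$, exactly as claimed in the statement.

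The central point is that the RSIE condition is expressed entirely through pairwise comparisons of one fixed player's cost between two groups. When a player $i\in G_1$ contemplates moving to $G_2\cup\{i\}$, the first condition reads $\hat{c}_i(G_2\cup\{i\})<\hat{c}_i(G_1)$ and the second reads $\hat{c}_j(G_2\cup\{i\})\le\hat{c}_j(G_2)$ for every $j\in G_2$. In each such inequality the same player appears on both sides, so the additive constant $M$ cancels, and the inequality holds for $\hat{c}$ if and only if it holds for the externality costs $c_i(G)=c(G)-c(G\setminus\{i\})$ of Theorem~\ref{thm:Externalities}. Hence the set of individually-unstable pairs is identical under $\hat{c}$ and under the externality protocol; by Theorem~\ref{thm:Externalities} no optimal matching contains such a pair, so every optimal matching remains a RSIE under $\hat{c}$.

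It then remains to check that $\hat{c}$ is overcharging. Summing over $G$,
\begin{equation*}
\sum_{i\in G}\hat{c}_i(G)=\sum_{i\in G}\bigl(c(G)-c(G\setminus\{i\})\bigr)+|G|\,M .
\end{equation*}
By the monotonicity assumption $c(G\setminus\{i\})\le c(G)$, each summand of the first term is nonnegative, so that term is $\ge 0$; since $|G|\ge 1$ and $M\ge c(G)$, the second term already satisfies $|G|\,M\ge c(G)$. Therefore $\sum_{i\in G}\hat{c}_i(G)\ge c(G)$ for every $G$, which is precisely the overcharging property.

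I expect the only delicate point to be conceptual rather than computational: recognising that RSIE (unlike a budget or a welfare condition) is invariant under a common additive shift of every individual cost, which is exactly what lets us repair the budget of the externality protocol for free without disturbing the equilibrium argument. Choosing the surcharge as $M=\max_{H\in\mathcal{G}}c(H)$ is deliberately generous—it dominates the worst shortfall $c(G)-\sum_{i\in G}\bigl(c(G)-c(G\setminus\{i\})\bigr)$ that economies of scale (submodularity) can create—so no finer estimate of that shortfall is required, keeping the non-oblivious information down to the single value $M$.
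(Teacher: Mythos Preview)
Your proof is correct and follows essentially the same approach as the paper: add a fixed per-player surcharge $D\ge\max_{H\in\mathcal{G}}c(H)$ to the externality-based cost, note that the surcharge cancels in every RSIE comparison so Theorem~\ref{thm:Externalities} still applies, and observe that the surcharge alone already covers $c(G)$. Your write-up is in fact more explicit than the paper's on both the cancellation argument and the overcharging verification.
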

\begin{proof}
Let $D \geq \max_{H \in \mathcal{G}}c(H)$. Define $c_i(G)$ as if it was externality-based, but adding $D$, i.e.

\begin{equation}
    c_i(G)=c(G) - c(G \setminus \{i\}) + D
\end{equation}

As $D$ is large enough, then individual costs are enough to cover the real ones. And because it is a fixed cost (every player $i$ has to pay it regardless of the group), the analyses regarding equilibria from Theorem \ref{thm:Externalities} are not affected.

\end{proof}

\subsection{Residual-based protocol}
We now study an idea that might look more natural, in which the costs faced by a user are directly related to the cost of the respective private trip. For this, we define the \textit{residual cost} of a group $\Delta c (G)$ as the difference between the individual and the group costs:

\begin{equation} \label{Eq:Residual}
    \Delta c (G) = c(G) - \sum_{i \in G} c_i(\{i\})
\end{equation}

Such residual costs can be positive or negative, depending on whether it is efficient to group those users together. In general, one might expect that they are negative for the groups that do take place, as otherwise it is not reasonable to form that group. Nevertheless, we do not assume that in what follows.

\begin{definition}
Let $G$ be a group. A \textbf{residual-based protocol} consists of sharing only the residual costs, i.e., there exists \textit{residual prices} $p(i,G) \: \forall i \in G$ such that $c_i(G)=c_i(\{i\})+p(i,G)$. All residual prices $p(i,G)$ have the same sign and must fulfil:

\begin{equation} \label{Eq:piG}
    \sum_{i \in G} p(i,G) = \Delta c(G)
\end{equation}
\end{definition}

Note that requiring every $p(i,G)$ to have the same sign and to fulfil Eq. \eqref{Eq:piG} implies that

\begin{equation} \label{Eq:Sign}
  \forall i \in G,   sign(p(i,G)) = sign(\Delta c(G))
\end{equation}

There might be several different residual split protocols, depending on the definition of $p(i,G)$. Some simple ideas might include making $p(i,G)$ proportional to $c_i(\{i\})$, or just a uniform division. In any of which, users first pay a cost that depends solely on their particular travel characteristics, and only the residual component depends on the rest of the group. These protocols are obviously budget-balanced, and we now show that any optimal matching is a RUE when they are applied.

\begin{theorem}
Assume that any residual split protocol is applied. Then, every optimal matching is a RUE.
\end{theorem}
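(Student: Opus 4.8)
The plan is to verify the two defining conditions of a RUE separately: that the optimal matching is a TNE, and that it contains no mergeable pair of groups. The second condition is immediate from earlier results. First I would observe that every residual-based protocol is budget-balanced: summing $c_i(G)=c_i(\{i\})+p(i,G)$ over $i\in G$ and using Eq. \eqref{Eq:piG} gives $\sum_{i\in G}c_i(G)=\sum_{i\in G}c_i(\{i\})+\Delta c(G)=c(G)$ by the definition of $\Delta c(G)$ in Eq. \eqref{Eq:Residual}. Lemma \ref{Lemma:BudgetBalancedRUE} then applies directly and guarantees that an optimal matching cannot contain two mergeable groups.

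The remaining and more delicate task is to show that the optimal matching is a TNE, i.e. that $c_i(G_i)\leq c_i(\{i\})$ for every selected group $G_i$ and every $i\in G_i$. By the structure of the protocol this is equivalent to $p(i,G_i)\leq 0$, and by the sign condition Eq. \eqref{Eq:Sign} it suffices to prove that $\Delta c(G_i)\leq 0$ for every group $G_i$ of the optimal matching. Here I would use an exchange argument: by budget balance on singletons, $c_i(\{i\})=c(\{i\})$, so $\Delta c(G_i)=c(G_i)-\sum_{j\in G_i}c(\{j\})$. If this were strictly positive for some selected $G_i$, then replacing $G_i$ by the singleton groups $\{j\}$ for $j\in G_i$ (all feasible, since $\{j\}\in\mathcal{G}$ for every $j$) would produce a valid matching of strictly smaller total cost, contradicting optimality. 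Hence $\Delta c(G_i)\leq 0$, which yields $p(i,G_i)\leq 0$ and therefore the TNE inequality for every player.

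Combining the two parts, the optimal matching is a TNE and admits no mergeable pair, which is exactly the definition of a RUE. The step I expect to be the crux is the TNE argument, since it is where the algebraic shape of the residual protocol (the common-sign requirement of Eq. \eqref{Eq:Sign} together with the singleton normalisation $c_i(\{i\})=c(\{i\})$) must be made to interact with global optimality; everything else reduces to bookkeeping or to invoking Lemma \ref{Lemma:BudgetBalancedRUE}.
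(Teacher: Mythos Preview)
Your proof is correct and follows essentially the same route as the paper: use budget balance plus Lemma \ref{Lemma:BudgetBalancedRUE} for the no-mergeable-pairs part, and use optimality to conclude $\Delta c(G_i)\le 0$, hence $p(i,G_i)\le 0$ via Eq.~\eqref{Eq:Sign}, for the TNE part. The only cosmetic difference is that you make explicit the budget-balance verification and the identity $c_i(\{i\})=c(\{i\})$, which the paper leaves implicit.
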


\begin{proof}
Let $(G_i)_{i \in P}$ be an optimal matching. We need to prove that it is a TNE and that no two different groups are mergeable.

To see that it is indeed a TNE, consider some $G_i$ with more than one passenger (groups of size 1 do not need to be analysed). Because this is an optimal matching, the cost of $c(G_i)$ cannot be larger than all of its members travelling alone, i.e.

\begin{equation} \label{Eq:Aux2}
    c(G_i) \leq \sum_{j \in G_i} c_j(\{j\})
\end{equation}

Eq. \eqref{Eq:Aux2} entails directly that $\Delta c(G_i) \leq 0$ (due to Eq. \ref{Eq:Residual}), which implies that $p(j,G_i) \leq 0 \: \forall j \in G_i$ (Eq. \ref{Eq:Sign}), and thus $c_j(G_i) \leq c_j(\{j\})$, which is the definition of a TNE.

The proof that no two groups can be mergeable follows directly from Lemma \ref{Lemma:BudgetBalancedRUE}.

\end{proof}

For the numerical simulations, we will use residual prices that are proportional to the individual costs, i.e.

\begin{equation}
    p(i,G)=\Delta c(G) \cdot \frac{c\{i\}}{\sum_{j \in G} c_j(\{j\})}
\end{equation}

Which implies that total costs for each user will be proportional to their private costs.

\subsection{Subgroup-based protocol}
A different way to combine individual and group costs is based on looking at the subgroups each player might belong to. Consider a group $G$ and a subgroup $H \subseteq G$. If $H$ presents a low average cost (in comparison with $G$), then players that belong in $H$ should not be charged much, as they are not directly inducing the costs on $G$.  On the other hand, consider a player $i$ such that every subset $H \subset G$ containing $i$ presents a higher average cost than $G$: in that case, player $i$ would be directly benefited when $G$ takes place, so she should pay more than the average. With this idea in mind, let us propose algorithm\footnote{We choose this writing of the algorithm because it eases the understanding. The algorithm can be adapted to run in polynomial time: Sort in a list all groups in $\mathcal{G}$ according to their average costs. Then, for each $G$, go through the said list, selecting the groups that are subsets of $G$ and that only contain elements that have not been assigned yet.} \ref{Alg:Subgroup} that takes as input the group $G$ and outputs, for each player $i$, a subgroup-based cost $z_i(G)$ and an associated subgroup $\varphi_i(G)$.%For a group $G$, if there is a subgroup $H$ that is very cheap, then players in $H$ should be charged less as they are not directly inducing a large cost on $G$. On the other hand, if for a player $i$ all propers subgroups of $G$ containing $i$ are more expensive (on average) than $G$, it means that $i$ is being benefited by the formation of $G$ and hence she should compensate other players. With this idea in mind, let us propose algorithm \ref{Alg:Subgroup} that takes as input the group $G$ and outputs, for each player $i$, a subgroup-based cost $z_i(G)$ and an associated subgroup $\varphi_i(G)$.

\begin{algorithm}
\begin{algorithmic}
\caption{Determining subgroup-based costs and the associated subgroups.}\label{Alg:Subgroup}
\STATE{Input: $G$}
\STATE{$W=G$; \% $W$ contains the users whose costs have not been determined yet}
\WHILE{$W\neq \emptyset$}
\STATE {$J = \argmin_{H' \subseteq W} \frac{c(H')}{|H'|}$; }
\FORALL{$i \in J$}
\STATE{$z_i(G)\leftarrow\frac{c(J)}{|J|}, \varphi_i(G)\leftarrow J$;}
\ENDFOR
\STATE{$W \leftarrow W \setminus J$;}
\ENDWHILE
\STATE{Output: $z_i(G), \varphi_i(G), \forall i \in G$}
\end{algorithmic}
\end{algorithm}

The associated subgroups $\varphi_i(G)$ represent which group is defining the cost of each $i$. It is relevant to remark a straightforward property fulfilled by these subgroups: they form a partition of $G$, that is to say, two conditions are met:
\begin{enumerate}
    \item $\forall i,j \in G$, either $\varphi_i(G) \cap \varphi_j(G)=\emptyset$ or $\varphi_i(G) = \varphi_j(G)$.
    \item $\bigcup_{i \in G} \varphi_i(G)=G$.
\end{enumerate}

Utilising costs $z_i$ would not necessarily induce a budget-balanced protocol. Let us define the excess $e(G)$ as:

\begin{equation}
    e(G)=c(G) - \sum_{i \in G} z_i(G)
\end{equation}

This excess might be positive or negative. To define the actual protocol, we modify the payoffs $z_i$ to achieve the budget-balanced property. If the excess is positive, we split the remainder uniformly among the users of the group (for our results below regarding optimal matching being equilibria, this remainder could be split in any way):

\begin{equation}\label{Eq:CostsSubgroupPositiveExcess}
  c_i(G) = z_i(G) + \frac{e(G)}{|G|} \: \text{ if } e(G) >0
\end{equation}

When $e(G)<0$, we sort the $\varphi_i(G)$ according to their average costs (from the cheapest to the most expensive), resulting in $\varphi^1,...,\varphi^Q$, with $Q$ the number of distinct subgroups $\varphi_i(G)$. Let us denote the respective costs $z^i=c(\varphi^i)/|\varphi^i|$ (we are omitting the explicit reference to $G$ to ease the notation). Define $I_1$ as the smallest index such that $z^{I_1} \geq c(G)/|G|$ (such an index exists because the excess is negative). We then diminish the costs of $\varphi^j$ for $j \geq {I_1}$ to be equal to the average cost of $G$ until we reach zero excess. The index $I_2$ representing the last $\varphi^j$ whose cost is adjusted to $c(G)/|G|$ can be characterised as the largest index such that

\begin{equation} \label{Eq:I2}
 \sum_{i=1}^{I_1-1} z^i |\varphi^i| + \sum_{i=I_1}^{I_2} \frac{c(G)}{|G|} |\varphi^i| + \sum_{i=I_2+1} ^Q z^i |\varphi^i| \geq c(G) 
\end{equation}

The definition of $I_2$ through Eq. \eqref{Eq:I2} implies that if we set the costs for users in $\varphi^{I_2+1}$ to be equal to $c(G)/G$, then the individual costs would not be enough to cover the group's costs. However, such costs might still be reduced from the original $z^{I_2+1}$, till the zero excess is reached. The precise expression for the budget-balanced costs in the case of negative excess is defined by:

\begin{equation} \label{Eq:CostsSubgroupNegativeExcess}
    \forall k \in \varphi^i, c_k(G)=
    \begin{cases}
    z^i & \text{ if } i<I_1 \text{ or } i>I_2+1 \\
    \frac{c(G)}{|G|} & \text{ if } i \in \{I_1,\ldots,I_2\} \\
    \frac{c(G) - \left(\sum_{j=1}^{I_1-1} z^j|\varphi^j| + \sum_{j=I_1}^{I_2} \frac{c(G)}{|G|}|\varphi^j| + \sum_{j=I_2+2} ^Q z^j|\varphi^j|\right)}{|\varphi^i|} & \text{ if } i=I_2+1
    \end{cases}
\end{equation}

The last case in Eq. \eqref{Eq:CostsSubgroupNegativeExcess} represents how to diminish the costs for users in $\varphi^{I_2+1}$ to make the excess equal to zero, which is not enough for $c_k(G)$ to reach $c(G)/|G|$ (by definition of $I_2$). Note that, in this case of negative excess, we only diminish some costs, which implies that
\begin{equation} \label{Eq:ciLowerThanzi}
c_i(G) \leq z_i(G) \forall i \in P 
\end{equation}

\begin{definition}
Let $G$ be a group. The \textbf{subgroup-based protocol} charges to each $i$ in $G$ the costs according to Eq. \eqref{Eq:CostsSubgroupPositiveExcess} if $e(G)\geq 0$, or to Eq. \eqref{Eq:CostsSubgroupNegativeExcess} if $e(G)\leq 0$.
\end{definition}

\begin{theorem} \label{Thm:Subgroups}
Assume that the subgroup-based protocol is applied. Then every optimal matching is both a RHE and a RUE.
\end{theorem}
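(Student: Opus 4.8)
The plan is to handle both claims through a single structural observation about Algorithm~\ref{Alg:Subgroup} together with the sign of the excess. First I would record the key preliminary fact: for any group $G$ belonging to an optimal matching the excess is non-positive, $e(G)\le 0$. Indeed, the associated subgroups $\varphi^1,\dots,\varphi^Q$ produced by Algorithm~\ref{Alg:Subgroup} partition $G$ into feasible subsets (feasibility is downward closed), and $\sum_{i\in G} z_i(G)=\sum_k c(\varphi^k)$; were this sum strictly below $c(G)$, replacing $G$ by the subgroups $\varphi^k$ in the matching would reduce total cost, contradicting optimality. Hence $c(G)\le\sum_i z_i(G)$, i.e. $e(G)\le 0$, which by Eq.~\eqref{Eq:ciLowerThanzi} places us in the regime where $c_i(G)\le z_i(G)$ for every $i\in G$. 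I would also invoke the elementary greedy minimality: whenever a subgroup is chosen, its average cost is no larger than that of any subset still remaining in $W$ (which also gives $z^1\le\dots\le z^Q$).

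For the RUE claim I would argue the two defining conditions separately. The ``no mergeable pair'' part is immediate from Lemma~\ref{Lemma:BudgetBalancedRUE}, since the subgroup-based protocol is budget-balanced by construction. For the TNE part, fix $i\in G$; at the moment $\varphi_i(G)$ was selected the singleton $\{i\}$ still belonged to $W$, so greedy minimality gives $z_i(G)=c(\varphi_i(G))/|\varphi_i(G)|\le c(\{i\})$. Combining with $c_i(G)\le z_i(G)$ from the preliminary step yields $c_i(G)\le c(\{i\})=c_i(\{i\})$, so every member weakly prefers $G$ to travelling alone and the matching is a TNE.

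For the RHE claim I would fix an optimal group $G$ and an arbitrary proper subset $H\subsetneq G$, and exhibit one passenger of $H$ who does not strictly gain by leaving. Running Algorithm~\ref{Alg:Subgroup} on $H$, let $J_H$ be the first (cheapest) subgroup, with average $z_H^1=c(J_H)/|J_H|$. The first sub-step is to check that every member of $J_H$ pays at least $z_H^1$ under the protocol applied to $H$: if $e(H)\ge 0$ the uniform surcharge only raises the charge, and if $e(H)<0$ the cheapest subgroup is never among those reduced toward $c(H)/|H|$ in Eq.~\eqref{Eq:CostsSubgroupNegativeExcess}, so its members still pay exactly $z_H^1$. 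The second sub-step is to locate, among the members of $J_H$, the passenger $i^\ast$ lying in the earliest $G$-subgroup $\varphi^{k^\ast}$ that meets $J_H$. Because no member of $J_H$ sits in $\varphi^1,\dots,\varphi^{k^\ast-1}$, the whole of $J_H$ is still present in $W$ when $\varphi^{k^\ast}$ is selected; greedy minimality then yields $z^{k^\ast}=c(\varphi^{k^\ast})/|\varphi^{k^\ast}|\le c(J_H)/|J_H|=z_H^1$. Chaining the inequalities, $c_{i^\ast}(G)\le z_{i^\ast}(G)=z^{k^\ast}\le z_H^1\le c_{i^\ast}(H)$, so $i^\ast$ does not strictly prefer $H$, i.e. $H$ does not want to leave $G$. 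Since $H$ was arbitrary, $G$ is hermetic and the matching is a RHE.

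The main obstacle I expect is the bookkeeping of the budget-balancing step inside the RHE argument: one must verify that the raw costs $z_i$ are never lowered for the cheapest subgroup of $H$ (so that the bound $c_i(H)\ge z_H^1$ survives the correction of Eq.~\eqref{Eq:CostsSubgroupNegativeExcess}), while simultaneously using $c_i(G)\le z_i(G)$ in the opposite direction for $G$. Keeping the two instances of the greedy decomposition aligned so that $z^{k^\ast}\le z_H^1$ points the right way is the delicate part; everything else reduces to greedy minimality and the sign of the excess established in the preliminary step.
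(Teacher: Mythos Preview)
Your proposal is correct and follows essentially the same strategy as the paper: establish $e(G)\le 0$ for optimal groups via the partition $\{\varphi^k\}$, use greedy minimality to find a witness in each $H\subsetneq G$ who does not gain, and invoke Lemma~\ref{Lemma:BudgetBalancedRUE} for the mergeability half of RUE. Your choice of witness (a member of the first $H$-subgroup $J_H$ whose $G$-subgroup is earliest) differs slightly from the paper's (the member of $H$ with smallest $z_i(G)$), which lets you avoid the paper's separate sub-case ``$e(H)<0$ and $c_i(H)<z_i(H)$''; and you prove the TNE condition for RUE explicitly, whereas the paper leaves it implicit in the hermeticity argument.
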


\begin{proof}
In the appendix.

\end{proof}

\subsection{Synthesis and analysis}
We have proposed three cost-sharing protocols, that determine how to split the costs among users sharing a ride. All these protocols depend only on the ride itself, without requiring any more information, and two of those are budget-balanced.  Each equilibrium notion from section \ref{scn:FormalGame}, has at least one corresponding protocol that enables the system to simultaneously reach optimum and equilibrium. Table \ref{tab:SynthesisEqPro} synthesises which protocol should be used depending on which equilibrium notion governs the system:

\begin{table}[H]
    \centering
    \begin{tabular}{|c|c|c|c|}
    \hline
     \textbf{Equilibrium}    & RHE & RUE & RSIE  \\
     \hline
    \textbf{Protocol to be used} &  Subgroup-based & Subgroup-based or Residual-based & Externality-based\\
    \hline
    \end{tabular}
    \caption{Synthesis of which protocol reaches PoS=1 depending on the equilibrium notion.}
    \label{tab:SynthesisEqPro}
\end{table}

Regarding the protocols, the rationale behind each of them can also be synthesised: the externality-based accounts for the extra costs induced to the other members of the group, the residual-based rests mostly on the individual costs (e.g., proportional to the time required to go from the origin to the destination), whereas the subgroup-based recognises which users could be efficiently matched together in smaller groups to determine how to split the costs.

We can gain more intuition by moving back to the illustrative example. We exhibit the resulting costs when applying each protocol in Table \ref{tab:ExampleIntro2}.

\begin{table}[H]
    \centering
    \begin{tabular}{|c|c|c|c|c|c|c|c|c|c|c|}
    \hline
    \multirow{2}{*}{\textbf{Group}}     & \multirow{2}{*}{\textbf{Total cost}} & \multicolumn{3}{c|}{\textbf{Externality-Based}} & \multicolumn{3}{c|}{\textbf{Residual-Based}} & \multicolumn{3}{c|}{\textbf{Subgroup-Based}}  \\
    \cline{3-11}
    & & User 1 & User 2 & User 3 & User 1 & User 2 & User 3 & User 1 & User 2 & User 3 \\
    \hline
    $\{1\}$ & 23 & 23 & - & - & 23 & - & -& 23 & - & - \\
    $\{2\}$ & 19 & - & 19 & - & - & 19 & - & - & 19 & - \\
    \rowcolor{Gray} 
    $\{3\}$ & 9.24 & - & - & 9.24 & - & - & 9.24 & - & - & 9.24 \\
    \rowcolor{Gray}
    $\{1,2\}$ & 31 & 12 & 8 & - & 16.98 & 14.02 & - & 15.5 & 15.5 & - \\
    $\{1,3\}$ & 31.96 & 22.72 & - & 8.96 & 22.8 & - & 9.16 & 22.72 & - & 9.24 \\
    $\{2,3\}$ & 26.08 & - & 16.84 & 7.08 & - & 17.55 & 8.53 & - & 16.84 & 9.24 \\
    $\{1,2,3\}$ & 40.44 & 14.36 & 8.48 & 9.44 & 18.15 & 15 & 7.29 & 15.6 & 15.6 & 9.24 \\
    \hline
    \end{tabular}
    \caption{Resulting costs per user for the Example defined by Figure \ref{Img:ExampleIntro} and Table \ref{tab:ExampleIntro}, depending on the protocol applied. Grey rows represent the matching that minimises the sum of total groups' costs.}
    \label{tab:ExampleIntro2}
\end{table}

Let us analyse in more detail the resulting costs for the first two passengers depending on the protocol. In this case, the direct users' costs are $8$ for the first user (only in-vehicle time) and also 8 for the second user ($6$ in-vehicle and $2$ waiting), while the remaining $15$ are operators' costs; how to split such operators' costs can be seen as the fare paid by the users.
\begin{itemize}
    \item Externality-based: The costs charged to the second passenger are equal to her direct users' costs (waiting and in-vehicle times), meaning that she would be charged no monetary fee, which happens because she induces no detour to the vehicle that would be already transporting the first passenger. The first passenger, on the other hand, does need to pay more than the direct users' costs, because sharing the vehicle implies that the second user needs to wait, which wouldn't happen if she would have travelled alone. Note that the monetary fee for the first passenger would be $4$, so the group would require to be subsidised with $11$.
    \item Residual-based: User 1 is being charged more than user 2 because she is travelling a longer distance. Thus the residual cost (Eq. \ref{Eq:piG}), distributed proportionally to the costs of individual rides, affects the first traveller more.
    \item Subgroup-based: As the optimal subgroup for the two users is the same, they are both being charged the same total amount. As they both have equal direct users' costs, the monetary fee would also be equal ($7.5$ each).
  
\end{itemize}
It is worth recalling that everybody travelling alone forms a TNE, i.e., users 1 and 2 could be split although that would be inconvenient for both of them, underpinning the idea that TNE is too weak.
%Comment about RHE. A protocol that punishes strongly the use of more vehicles can achieve a RHE, but it is not interesting.

%MAYBE A LARGE EXAMPLE HERE, THAT COMPARES THE DIFFERENT COSTS. IN PARTICULAR, THE DIFFERENCE BETWEEN EXTERNALITY-BASED AND SUBGROUP-BASED (FOR INSTANCE, TWO PLAYERS SHARING ALMOST-EXACTLY THE SAME ROUTE MIGHT FIND QUITE DIFFERENT COSTS).

% \section{Experiments} \label{scn:experiments}

\section{Numerical simulations}\label{scn:results}

\subsection{The scenario}

We consider a batch of 400 travellers departing within a 10-minute window (2400 passengers/h) in the PM peak-hour in Amsterdam. Trips, defined through the exact origin, destination and departure time, are sampled from the nation-wide demand dataset (\cite{arentze2004learning}). Travellers have their individual costs $C(G,i)$ covering their in-vehicle travel and waiting times (denoted as $t(G,i)$ and $w(G,i)$, respectively), and they share vehicle's costs with fellow travellers.
We compute travel costs using average value of time (9 \EUR{}/hour) as $\beta_t$, and we multiply it by 1.5 to get the penalty for waiting $\beta_w$. We assume here that travel time is weighted equally regardless of the number of co-travellers:
\begin{equation}
    C(G,i) = \beta_t t(G,i) + \beta_w w(G,i)
\end{equation}
We calculate the vehicle costs $C_O(G)$ proportional to the trip distance $l(G)$ plus a fixed ride cost:
\begin{equation}
    C_O(G) = \beta_l l(G) + \beta_V,
\end{equation}
 with $\beta_l = 1$ \EUR{}/KM and $\beta_V=1$\EUR{}.
We do not consider extra societal costs in the experiments. Hence, total group costs $c(G)$ are composed of $\sum_{i \in G}C(G,i)$ and operator's costs, $C_O(G)$ only.

To compute the set of feasible groups $\mathcal{G}$ we apply the hierarchical exact algorithm of ExMAS (\cite{kucharski2020exact}) which computes all feasible groups of travellers (of any size). A group is declared feasible when for all the travellers the additional detour and delay can be compensated thanks to sharing. ExMAS in this configuration generates about 2200 feasible groups of various sizes, constituting $\mathcal{G}$ for further calculations.
%such that pooling is attractive for all travellers, i.e. pooling detour and delay are compensated with a 25\% fare discount.

We first calculate the pricings in the three proposed protocols for each traveller-ride combination. Then we prune the groups according to the different notions of equilibria, and finally we perform the matching to assign travellers to groups. Matching is done first with the objective to minimise total travel costs (to compute the price of stability of the system) and then to maximise it (to compute potential anarchy of the system).

\subsection{Results}

First we illustrate how the different equilibrium notions restrict the number of feasible solutions, by means of the so-called `shareability graph' (Figure \ref{fig:grafs}). In such a graph, each traveller is a node, and nodes $i$ and $j$ are connected if the corresponding users could be feasibly matched together (that is, if the group $\{i,j\} \in \mathcal{G}$). In Figure \ref{fig:grafs} we show how TNE and RHE pruned the initial $\mathcal{G}$ (recall that the other notions of equilibrium - RUE and RSIE - prune solutions based on exclusive pairs of groups rather than unfeasible groups). Out of 2191 initially feasible groups, 1708 remain for TNE and 1366 for RHE. However, such pruning does not significantly alter the graph structure, that always consists of one highly connected giant component and few disconnected nodes.

Table \ref{tab:prunings} summarises the results of the simulation of the effects of the different equilibrium notions, where we report how the 2191 initially feasible groups are pruned and how it affects the size of the remaining groups. TNE prunes mainly groups of size 2, with groups shared by more than two travellers remaining almost intact. Consequently, filtering 2191 groups to 1708 for TNE decreases the mean degree mildly. When we allow subgroups to coordinate to leave together, i.e. when we study RHE, rides of greater size are further pruned, resulting in a significantly lower average degree.

We also report in Table \ref{tab:prunings} the number of mutually exclusive constrains imposed by the RUE and RSIE protocols. This number is much higher for RSIE, i.e., when users can move individually to other group willing to receive her, the resulting equilibria seem to be much more restrictive than when admitting merges between groups. This fits the fact that a RUE always exists, whereas a RSIE does not necessarily exist.

\begin{figure}[H]
     \centering
     \begin{subfigure}[b]{0.3\textwidth}
         \centering
         \includegraphics[width=\textwidth]{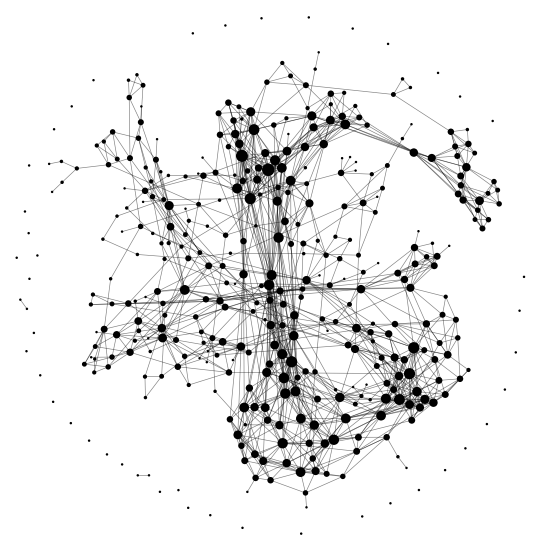}
         \caption{basic (2191 groups)}
         \label{fig:graf_a}
     \end{subfigure}
     \hfill
     \begin{subfigure}[b]{0.3\textwidth}
         \centering
         \includegraphics[width=\textwidth]{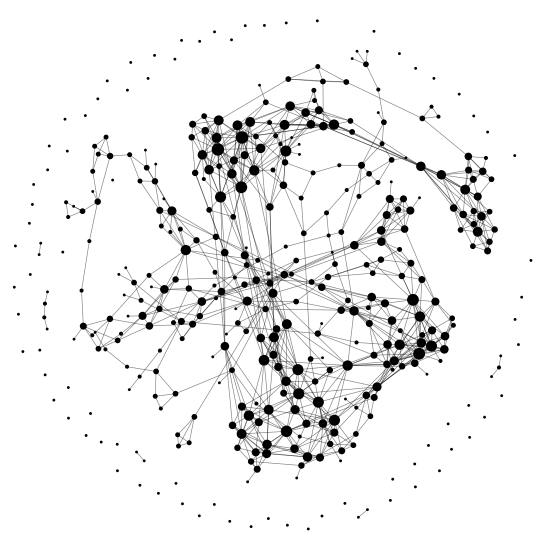}
         \caption{TNE (1708 groups)}
         \label{fig:graf_b}
     \end{subfigure}
     \hfill
     \begin{subfigure}[b]{0.3\textwidth}
         \centering
         \includegraphics[width=\textwidth]{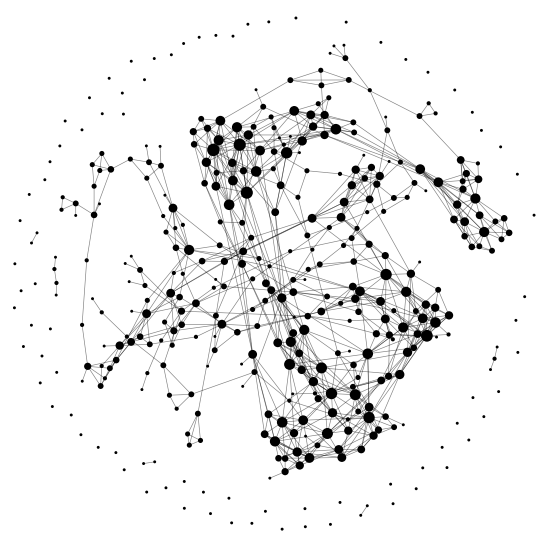}
         \caption{RHE (1366 groups)}
         \label{fig:graf_c}
     \end{subfigure}
        \caption{\small{Shareability graphs for selected pruning algorithms. Nodes denote travellers, which are linked if they can share a ride. Nodes are sized according to their degree (i.e. number of connecting edges). The number of nodes remains fixed across the pruning, and the number of links decreases as groups are being excluded in the pruning procedures.}}
        \label{fig:grafs}
\end{figure}

\begin{table}[ht]
\centering
\resizebox{0.5\textwidth}{!}{
\begin{tabular}{l|r|rrrr|r|r}
\toprule
{algorithm} &  number &   of size    &        &       &  & mutually   &    mean \\
{} &  of groups &      1 &       2 &      3 &     4+ &  exclusives & degree   \\
\midrule
Basic    &              2191 &  400 &  1348 &  363 &  79 &                   - &  2.05 \\
TNE      &              1708 &  400 &   928 &  305 &  74 &                   - &  2.03 \\
HERMETIC &              1366 &  400 &   806 &  142 &  18 &                   - &  1.83 \\
RUE      &              2191 &  400 &  1348 &  363 &  79 &                2762 &  2.05 \\
RSIE     &              2191 &  400 &  1348 &  363 &  79 &               24404 &  2.05 \\
\bottomrule
\end{tabular}}
\caption{The number of groups and the distribution of their degrees yielded by each of the pruning algorithms, as well as the number of mutually exclusive constrains (between group pairs) in the respective equilibrium notions.} 
\label{tab:prunings}
\end{table}

We now analyse the different cost-sharing protocols. Intuitively, one expects that shorter trips result in lower costs, as well as groups of larger sizes (because the costs are split among many). This is studied in Figure \ref{fig:Scatter}, where we show users' costs for the 400 travellers in 2191 groups (4085 traveller-group pairs), and their relationship with distance and degree, for the different cost-sharing protocols; we also compare these results with users' direct costs $c(G,i)$, i.e., when accounting only for their travelling times. 

The top row of Figure \ref{fig:Scatter} shows the relationship between costs and distance. The colour of the dots represents the size of the respective group. We  see  that  direct  costs (left  panel) are always lower for single rides, because there is no detour. However, when a cost-sharing protocol is introduced, that is, when the operators' costs are split among the users, larger groups become more attractive and mostly dominate private rides. This conclusion is reinforced when examining the bottom row, that shows users' costs depending on the groups' size, where it is apparent that the increasing trend observed in the left panel is neutralised or reversed in the other panels. There are some noteworthy aspects for each protocol:

\begin{itemize}
    \item \textbf{Subgroup-based}: This protocol does not exhibit a high correlation between distance and users' costs, as many dots are placed far from the diagonal. In addition, it is the protocol in which large groups are favoured the most, as most red dots present the lowest cost for a given distance. This is in line with intuition, as this protocol is meant to ensure that efficient groups are hermetic, so that (a number of) their members are assigned with the lowest costs when the group is complete.
    \item \textbf{Residual-based}: This protocol exhibits the highest correlation between distance and users' costs, which is expected since costs are calculated proportionally to distances.
    \item \textbf{Externality-based}: This protocol exhibits the lowest correlation between distance and users' costs. This means that the protocol is effectively capturing that the relevant aspect for this protocol is not the total distance, but the total detour imposed on others. The few dots at the bottom with zero costs represent cases in which a group $H$ would not be feasible according to the ExMAS algorithm which determines the set $\mathcal{G}$, but there is some feasible group $G$ containing $H$. In those cases, $H$ is added assuming the cost of the cheapest feasible group containing it.
    
\end{itemize}

\begin{figure}[H]
    \centering
    \includegraphics[width=\textwidth]{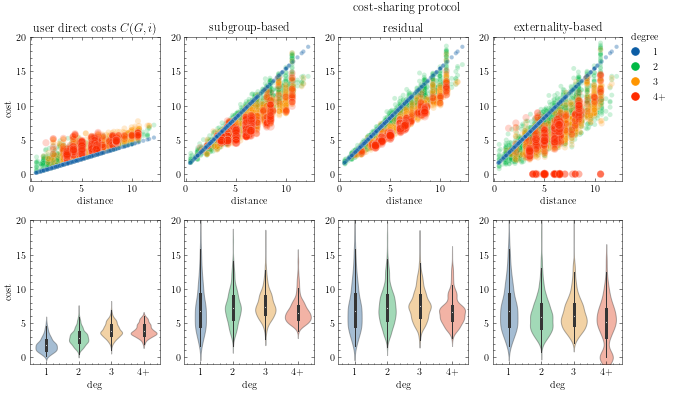}
    \caption{\small{User costs as a function of distance (top row). First we plot direct user costs, followed by the costs under the three introduced cost-sharing protocols. Each dot denotes a traveller-group pair and colours denote the size of the corresponding group. The blue diagonal line denotes non-pooled rides. The bottom row shows the cost distributions varying with the groups' size for the different protocols.}}
    \label{fig:Scatter}
\end{figure}

In Table \ref{tab:GlobalResultsPoSPoA} we report the most relevant indices for each protocol. We study the case in which the operator can propose a solution (`Best case', the one that we focus on throughout the paper), but we also analyse the `Worst case' where we assume the users are not coordinated, so they may reach the solution of maximal, rather than minimal, system-wide costs. 

We report passenger-hours, vehicle-hours, number of groups and total cost (reported as PoS - best and PoA - worst, respectively). The best-case analysis confirms that for each equilibrium there is at least one pricing allowing to reach the PoS$=1$ (as synthesised in Table \ref{tab:SynthesisEqPro} in the previous section). Moreover, for all the protocols, the price of stability is low regardless of the equilibrium notion, meaning that the protocols are robust. In fact, all the indices are very similar across the protocols, i.e., they yield similar best solutions. 

The worst-case results show that the price of anarchy is between 1.1 and 1.22 in this example. That is to say, if the system does not propose a solution, losses can be as high as 22\%, which highlights the relevance of having centralised solutions. The residual-based protocol is the most effective one in the case with no central coordination.

Note that the highest PoA is always reached when studying RHE, because everybody travelling alone is a RHE, which is exactly the worst solution (400 groups). For RUE and RSIE the contrary happens, i.e., passenger-hours are similar (albeit higher) to those obtained in the best-case analysis, while increasing vehicle-hours and the number of groups. That is to say, while the system is providing almost the same quality of service as in the best-case scenario, it does so through a non-efficient utilisation of the vehicles. In the case of RUE, this happens because this notion of equilibrium can admit any large group, regardless of its efficiency; something similar happens with RSIE, but indirectly: when groups are large, few individual movements from one group to another are feasible.

In Table \ref{tab:Subsidies} we report which portion of the total costs are covered by the sum of users' individual costs in the best-case solution for the only protocol which is not budget-balanced: externality-based protocol. Even though more than 80\% of the total costs are covered by the sum of users' costs, additional subsidies would be needed regardless of the equilbrium notion.

% \multirow{2}{*}{\textbf{Group}}     & \multirow{2}{*}{\textbf{Total cost}} & \multicolumn{3}{c|}{\textbf{Cost per user}}  

\begin{table}[H]
    \centering
    \resizebox{0.9\textwidth}{!}{
    \begin{tabular}{|c|c|c|c|c|c|c|c|c|c|}
    \hline
         &  & \multicolumn{4}{c|}{\textbf{Best-Case}} & \multicolumn{4}{c|}{\textbf{Worst-Case}} \\
         \cline{3-10}
    \textbf{Protocol} & \textbf{Equilibrium} & \textbf{Pax-Hours} & \textbf{Veh-Hours} & \textbf{Nº of groups} & \textbf{PoS} & \textbf{Pax-Hours} & \textbf{Veh-Hours} & \textbf{Nº of groups} & \textbf{PoA} \\
    \hline
    \multirow{2}{*}{Residual-based} & RHE & 83.64 & 41.84 & 234 & 1.001 & 60.34 & 60.34 & 400 & 1.22 \\
     & RUE & 84.01 & 41.64 & 234 & 1 & 89.31 & 54.46 & 275 & 1.18 \\
      & RSIE & 84.12 & 41.66 & 234 & 1.001 & 93.28 & 49.08 & 235 & 1.12 \\
      \hline

    \multirow{2}{*}{Subgroup-based} & RHE & 84.01 & 41.64 & 234 & 1 & 60.34 & 60.34 & 400 & 1.22 \\
     & RUE & 84.01 & 41.64 & 234 & 1 & 88.82 & 54.57 & 277 & 1.19 \\
      & RSIE & 83.16 & 42.25 & 235 & 1.003 & 93.4 & 49.53 & 234 & 1.12 \\
      \hline
      \multirow{2}{*}{Externality-based} & RHE & 82.14 & 42.62 & 240 & 1.006 & 60.34 & 60.34 & 400 & 1.22 \\
       & RUE & 84.01 & 41.64 & 234 & 1 & 89.02 & 54.59 & 276 & 1.19 \\
        & RSIE & 84.01 & 41.64 & 234 & 1 & 92.88 & 50.11 & 238 & 1.13 \\
    
    \hline
    \end{tabular}}
    \caption{KPIs for the three cost-sharing protocols proposed in the paper, depending on the equilibrium notion that governs the system. On the left side we show the `best-case' results, i.e. when we minimise total costs subject to the respective equilibrium, representing the best solution that could be proposed by the system's operator. On the right side we show the `worst-case' solution, obtained by maximising costs subject to the respective equilibrium, representing the worst possible outcome if users coordinate by themselves.}
    \label{tab:GlobalResultsPoSPoA}
\end{table}

\begin{table}[H]
    \centering
     \resizebox{0.3\textwidth}{!}{
    \begin{tabular}{|c|c|}
    \hline
    \textbf{Equilibrium}     & \textbf{Portion}  \\
    \hline
    RHE     & 0.81 \\
    RUE & 0.82 \\
    RSIE & 0.82 \\
    \hline
    \end{tabular}}
    \caption{Portion of the total costs covered by the sum of users' costs when the externality-based protocol is utilised, considering the best-case solution.}
    \label{tab:Subsidies}
\end{table}

Finally, in Figure \ref{fig:hist} we study users' satisfaction in comparison to the cheapest alternative they have (i.e., if they had the opportunity to select their most preferred group, regardless of their co-travellers' opinion). To do this, we plot the cumulative distribution of the relative differences between the cheapest alternative for each user, and the individual cost of the group that actually contains her in the best possible matching for each equilibrium notion. Null difference implies a perfect match (best personal option) which increases as users are matched to more expensive pooled rides. These curves complement system-wide indicators, as they allow to investigate distributional effects among users. The  faster the curve reaches 1, the more equalised the matching.   

We observe that the panels in Figure \ref{fig:hist} look similar regardless of the equilibrium notion. For all of them, the externality-based protocol provides the least equal outcome, with some users facing costs that can be higher than 2 times their best option, whereas using the subgroup-based or residual-based protocols this number is reduced to little more than 1.5. The residual-based protocol achieves the highest equity. Recalling that RSIE charges less from the users than the other protocols (as it does not collect enough to cover the whole operators' costs), an interesting trade-off emerges between total users' costs and its (unequal) distribution.

\begin{figure}[H]
    \centering
    \includegraphics[width=\textwidth]{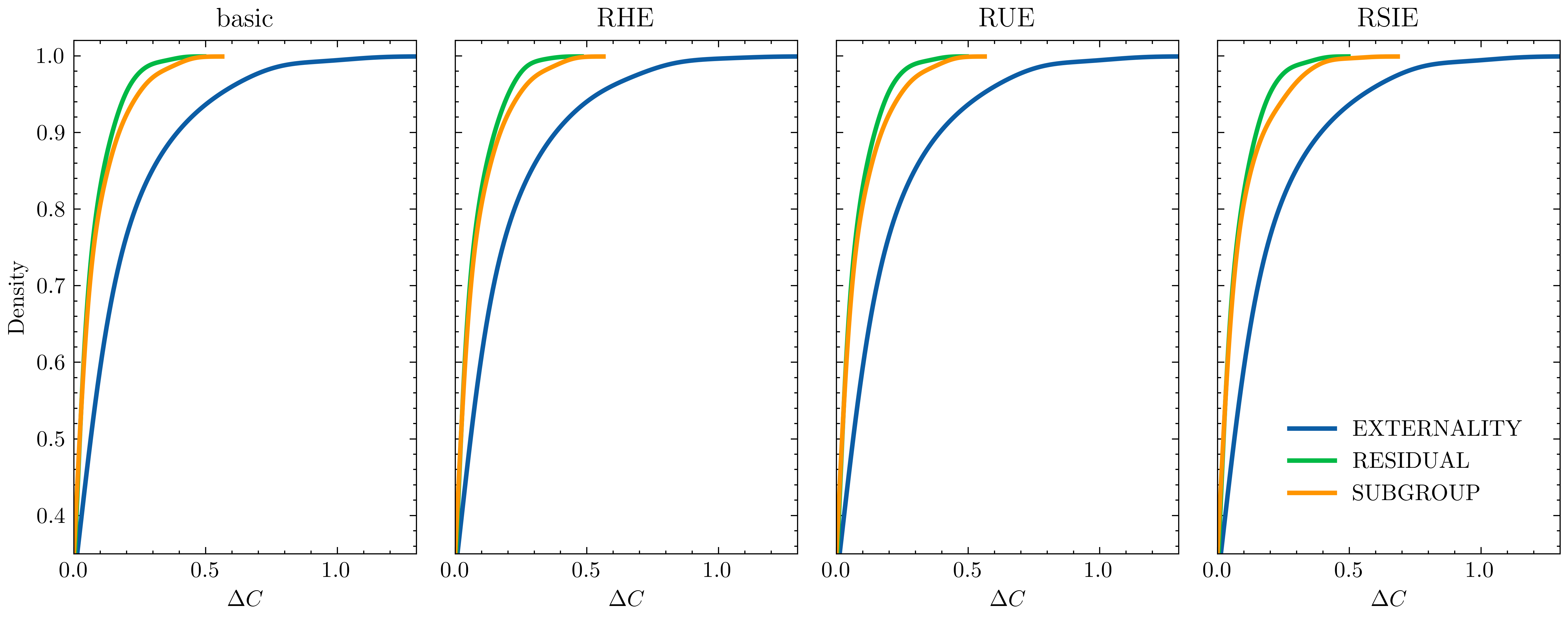}
    \caption{\small{Cumulative distribution of relative differences between each user's lowest possible cost, and the cost imposed on her by the best possible matching for each equilibrium notion. Different colors represent the various cost-sharing protocols.}}
    \label{fig:hist}
\end{figure}

\section{Synthesis, conclusions and future research}\label{scn:conclusions}

In this paper, we address the issue of how to split common costs when users share a ride in a mobility system that decides how to group the users. After recognising that the way costs are split can affect which groups are going to be formed and hence the quality of the solution (matching), we have modelled the said situation as a game, in which each user can choose with whom to travel, as long as all co-travellers agree.

In order to study the possible equilibria in such a game, we have proved that it suffices to consider pure strategies only. Moreover, we have discussed that the traditional notions of Nash and Strong Equilibria are not the most appropriate ones as they may either prevent sharing on one hand (Nash) or require unfeasible coordination on the other (Strong). We therefore proposed three intermediate definitions of equilibrium depending on which are the possible ways in which users can coordinate. Moreover, we have also proved that determining whether a Strong Equilibrium exists is an NP-Complete problem, through a reduction from the so-called $3$-Exact-Cover.

For each of these equilibrium notions we have proposed a corresponding cost-sharing mechanism that reaches a price-of-stability equal to one, i.e., that makes any optimal solution also an equilibrium. By this means, we allow a central operator to group the users, simultaneously reaching a system-wide optimum and a users' equilibium. When deciding what is the cost for an individual within a given group, the protocols determine based on (i) the cost of the respective private trip (`residual-based'), (ii) the extra-costs imposed to the co-travellers (`externality-based'), or (iii) the costs of the subgroups that contain the said user (`subgroup-based').

We tested our ideas by computing the feasible groups for a set of 400 travellers in Amsterdam forming 2200 feasible groups. Numerical results show that our protocols effectively make efficient larger groups to be preferable by the users, and that they always reach a price-of-stability close to 1. We also show that if the system cannot propose a solution and users coordinate by themselves then the worst case induces around 20\% extra-costs for two of the equilibrium notions.

Our discussion and methods demonstrate that selecting an appropriate cost-sharing mechanism, understandable by the users, can play a key role in ensuring that an equilibrium exists and in aligning users' interests with a system-wide optimum. Furthermore, the ability of proposing a central coordination is also crucial to make this type of mobility systems attractive.

As this is an emerging topic, there are numerous directions for further research. In this paper, we have assumed a central operator that aims for a global optimum; if a for-profit company was considered instead, our model would have to be modified to consider the company's interests as well (for instance, as an additional player). We assumed the demand is exogenous and fixed, whereas in fact such cost-sharing protocols can make the system more attractive and thus induce the demand, triggering a positive feedback loop as the critical mass needed for pooling is reached. Moreover, we have assumed that the system's optimum depends only on total costs, regardless of equity aspects, although we show that results can actually be far from equal for all the users involved, which suggests yet another direction for further research. On a theoretical note, studying protocols that ensure bounds on the price of anarchy, as well as determining the complexity of deciding whether a RSIE exists if all the groups have size no larger than 2, are relevant questions that emerge from this paper. Finally, equilibrium analysis in a dynamic environment, taking into account users that will emerge in the future for which there is partial or no information available, is also an intriguing research avenue, that has been analysed only from the point of view optimal matching, both theoretically (\cite{Feng2020Two,torrico2017dynamic}) and in applied models (\cite{wallar2018vehicle,wen2017rebalancing,van2018enhancing,alonso-mora_predictive_2017}).

\section*{Acknowledgments}
The authors want to thank Jos\'e Correa, from Universidad de Chile, and Orlando Rivera, from Universidad Adolfo Ib\'a\~{n}ez, for the fruitful discussions that helped to elaborate the contents of this paper. The work of the second and third authors was supported by the CriticalMaaS project (grant number 804469), which is financed by the European Research Council and the Amsterdam Institute for Advanced Metropolitan Solutions.

\printbibliography

\section*{Appendix}

\subsection*{Continuation of the proof of Theorem \ref{Thm:TSENPComplete} a)}

We now need to prove that there is an exact cover in the original instance of $3$-exact-cover iff there is a strong equilibrium in this instance of $CTG$.

If there is an exact cover $b_1,...,b_p$ in the original instance, it is straightforward how to build a strong equilibrium in the instance of $CTG$. Indeed, consider the profile of strategies in which $\alpha$ travels alone, and the other groups are $b_1,...,b_p$. Therefore, every element of $S$ is in its most preferred situation, and $\alpha$ has no other choice than being alone.

We will now show that these are the only possible strong equilibria in such an instance of $CTG$, more precisely we will prove the following Lemma:

\begin{lemma} \label{Lemma1}
Let $h_1,...,h_q$ be a strong equilibrium on the proposed instance of $CTG$. Then there exists $\ell$ such that $h_\ell=\{\alpha\}$ and $\forall j \neq i, h_j$ corresponds to some $b_i$ from $\mathcal{B}$.
\end{lemma}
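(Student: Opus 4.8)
The plan is to work from the blocking-group characterization of a TSE in Eq. \eqref{Eqn:TSE}: a coordinated profile is a TSE if and only if no feasible group $G$ is \emph{blocking}, meaning no $G$ has every member strictly preferring $G$ to her assigned group. Two structural facts are baked into the construction and I would record them first: the groups equal to some $b_i$ are the unique top-ranked groups (and are all tied at the top), singletons are everyone's unique worst option, and crucially $\alpha \notin \bigcup \mathcal{B}$, so $\alpha$ can never sit in a top group. Given a TSE $h_1,\ldots,h_q$, I split the chosen groups into the collection $\mathcal{C}$ of those equal to some $b_i$ and the \emph{residual} set $R$ of all players lying in non-$b_i$ groups; note $\alpha \in R$. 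The lemma is then equivalent to $R=\{\alpha\}$, since this forces $\mathcal{C}$ to partition $S$ (an exact cover) and leaves $\alpha$ as the lone singleton.

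I would then make two quick reductions. First, no $b_i$ is contained in $R$: otherwise all three of its members sit in non-top groups, strictly prefer their common top group $b_i$, and hence $b_i$ is blocking. Second, since the chosen $b$-groups are disjoint triples inside $S$, they cover exactly $3|\mathcal{C}|$ elements, so $|R\cap S| = |S| - 3|\mathcal{C}|$ is a multiple of $3$ and therefore $|R| \equiv 1 \pmod 3$. Combining these, $R=\{\alpha\}$ will follow as soon as I rule out $|R|\ge 4$.

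The core of the proof is to show that $|R|\ge 4$ is impossible. Because the preferences among residual players depend only on the relative cyclic order through $\prec_i$ (and the tie-break rules for triples versus pairs are stated purely in terms of $\prec_i$ and equality), I can relabel the residual players as $r=|R|$ points on a circle, each preferring her nearest clockwise neighbour, with all groups of size $\le 3$ available but none top-ranked. I would first show a stable residual configuration has no singleton: if $a$ is alone, its clockwise predecessor $b$ (whose favourite partner is $a$) forms with $a$ a blocking pair, as $b$ strictly improves by pairing with her top choice while $a$ strictly improves over being alone. With every residual player then in a pair or triple but $r$ not divisible by $3$, the ``consecutive-triple'' tiling that stabilises the case $r\equiv 0 \pmod 3$ is unavailable, and one locates a blocking group. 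The smallest case $r=4$ already exhibits both flavours: the configuration of two ``opposite'' pairs is destabilised by a wrap-around blocking pair, while two ``interleaved'' pairs are destabilised by a blocking triple of three consecutive players (using the tie-break rule that prefers the triple when the closest co-player coincides). This is exactly the generalisation of Example \ref{Example:TSE}; the difference is that triples are feasible here, which is precisely why the residue of $r$ modulo $3$ governs stability.

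I expect this last step — proving that every residual configuration with $r\ge 4$ and $r\not\equiv 0 \pmod 3$ contains a blocking group — to be the main obstacle. The cyclic antisymmetry of the preferences (if $i$ covets $j$, then $j$ ranks $i$ last) defeats naive predecessor-successor blocking pairs and forces a global case analysis over the possible partitions into pairs and triples, with both pair-blocks and triple-blocks genuinely needed. Once this instability is established, $|R|\ge 4$ contradicts the TSE property, so $|R|=1$, giving $R=\{\alpha\}$; hence $\mathcal{C}$ is an exact cover of $S$ and $\alpha$ travels alone, which is the statement of Lemma \ref{Lemma1}. Together with the converse direction already exhibited, this completes the reduction.
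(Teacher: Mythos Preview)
Your approach mirrors the paper's: reduce to the residual set $R$ of players sitting in non-$\mathcal{B}$ groups, observe $|R|\equiv 1\pmod 3$, and rule out $|R|\ge 4$ via forbidden configurations. Your no-singleton argument is in fact tidier than the paper's, which splits that conclusion into two separate cases (a non-$\mathcal{B}$ triple together with a singleton, and two singletons) rather than using the clockwise-predecessor trick.

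Where you diverge is the final step, which you flag as ``the main obstacle'' and propose to handle by a global case analysis over all residual partitions into pairs and triples. This is unnecessary. The paper's one missing ingredient is simply that \emph{any two pairs are unstable}. Given pairs $\{x,y\}$ and $\{w,z\}$, not both of $x,y$ can have the other as their favourite among the four (by the cyclic order), so say $x$'s favourite among $\{y,w,z\}$ lies in $\{w,z\}$. Then $\{x,w,z\}$ is a blocking triple: $x$ strictly improves her closest co-player, while $w$ and $z$ each strictly prefer the triple over their pair because the closest co-player either improves or stays equal, and the stated tie-break rule (a triple beats a pair when the closest co-player coincides) makes the preference strict in the equal case. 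Once singletons are excluded, $2p+3t=|R|\equiv 1\pmod 3$ forces $p\equiv 2\pmod 3$, hence $p\ge 2$, and you are done for every $r$ at once---no induction or enumeration is needed.

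A small correction to your $r=4$ sketch: the ``opposite'' pairing $\{1,3\},\{2,4\}$ is \emph{not} destabilised by any blocking pair (precisely the cyclic antisymmetry you yourself invoke rules that out), but by a blocking triple of three consecutive players, just as in the ``adjacent'' case.
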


Lemma \ref{Lemma1} implies that when there is a strong equilibrium in the instance of $CTG$, we can construct the exact cover just by taking all such subsets $h_j$. The proof of Lemma \ref{Lemma1} rests on another technical Lemma, that limits which situations can occur in a strong equilibrium:

\begin{lemma}\label{Lemma2}
Let $h_1,...,h_q$ be a strong equilibrium on the proposed instance of $CTG$. Then none of the following situations can happen:
\begin{enumerate}[i)]
    \item The existence of one $h_i$ of size 3 that does not come from $\mathcal{B}$ and the existence of one $h_j$ of size one.
    \item The existence of two $h_i,h_j$ both of size 2.
    \item The existence of two $h_i,h_j$ both of size 1.
\end{enumerate}

\end{lemma}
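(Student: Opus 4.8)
The plan is to prove the contrapositive constructively: assuming that the strong equilibrium $h_1,\dots,h_q$ exhibits one of the three configurations, I would exhibit an explicit \emph{blocking coalition}, i.e.\ a feasible group $G$ such that every member of $G$ strictly prefers $G$ to her current group, which directly violates the defining condition of a TSE in Eq.~\eqref{Eqn:TSE}. Two facts will be used throughout: every subset of $S^*$ of size at most $3$ is feasible, so any pair or triple I build is admissible; and a singleton is the strictly worst group for its member, so a player travelling alone strictly prefers \emph{any} pair. I will reason only with the cyclic preference $\prec_i$ of Eq.~\eqref{Eq:Circle} and the lexicographic rule ranking pairs and non-$\mathcal{B}$ triples, since in all three situations the group that players want to leave is never a group of $\mathcal{B}$ (it is a singleton, a pair, or an explicitly non-$\mathcal{B}$ triple).

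Configuration (iii) is immediate: if $\{a\}$ and $\{b\}$ are both present, the pair $\{a,b\}$ is feasible and strictly preferred by both $a$ and $b$ over their singletons, hence blocks. For configuration (i), let $h_i=T$ be a size-$3$ group not coming from $\mathcal{B}$ and let $h_j=\{a\}$. I would pick $w\in T$ to be the first member of $T$ met when moving counter-clockwise from $a$ on the preference circle of Eq.~\eqref{Eq:Circle}; equivalently, $w$ is the member of $T$ for which $a$ is cyclically nearest, so that $a\prec_w u$ for both other members $u$ of $T$. Then $\{a,w\}$ blocks: $a$ improves because she leaves a singleton, and $w$ improves because, $T$ being a non-$\mathcal{B}$ triple ranked lexicographically, her nearest co-traveller in $T$ is strictly worse (in $\prec_w$) than $a$, so by the pair-versus-triple rule $w$ strictly prefers $\{a,w\}$ to $T$. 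Here it is essential that $T\notin\mathcal{B}$, for otherwise $T$ would be top-ranked and $w$ would never leave.

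For configuration (ii), I would take two disjoint pairs on four distinct players and relabel them $p_1,p_2,p_3,p_4$ in clockwise cyclic order; since $\prec_i$ restricted to the other three players depends only on this cyclic order, this loses no generality and reduces the analysis to three partitions: the two ``adjacent'' ones $\{p_1,p_2\},\{p_3,p_4\}$ and $\{p_1,p_4\},\{p_2,p_3\}$, and the ``crossing'' one $\{p_1,p_3\},\{p_2,p_4\}$. In each adjacent case a blocking \emph{pair} exists (for $\{p_1,p_2\},\{p_3,p_4\}$ one checks that $\{p_2,p_4\}$ is strictly preferred by both $p_2$ and $p_4$, and symmetrically $\{p_1,p_3\}$ works for the other adjacent case). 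The crossing case is the crux and the main obstacle: there the preferences form the cyclic pattern $p_1\to p_2\to p_3\to p_4\to p_1$ in which each player's favourite among the four is the next one, so exactly as in Example~\ref{Example:TSE} \emph{no} re-pairing improves both partners and there is no blocking pair. The way around it is to form the \emph{triple} $\{p_1,p_2,p_3\}$: players $p_1$ and $p_2$ each acquire a strictly nearer nearest-co-traveller, while $p_3$ keeps $p_1$ as her nearest co-traveller yet is still strictly better off because the tie-breaking rule ranks a triple above a pair sharing the same nearest member. Verifying these three strict preferences via the lexicographic rule, and checking feasibility (size $\le 3$) together with the fact that none of the exhibited coalitions is already formed, completes the contradiction in every case.
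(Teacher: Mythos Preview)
Your argument is correct. Cases (i) and (iii) coincide with the paper's proof almost verbatim: for (iii) the pair of two singletons blocks, and for (i) you pair the singleton with the member of the non-$\mathcal{B}$ triple lying immediately counter-clockwise of it, which is exactly the paper's ``$w$ falls between $x$ and $y$'' choice, only stated more carefully.

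The real difference is case (ii). You split into the two ``adjacent'' pairings (blocked by a pair) and the ``crossing'' pairing (blocked by the triple $\{p_1,p_2,p_3\}$), and your verifications of the strict improvements are accurate, including the key use of the tie-breaking rule that a triple beats a pair sharing the same nearest co-traveller. The paper avoids this case split with a single observation: in any pair $\{x,y\}$ on the cycle, at least one of them, say $x$, does not have her partner as her favourite among the four players (two players can never be mutual favourites in a cyclic order). Then the triple $\{x,w,z\}$ formed by $x$ together with the \emph{entire} other pair always blocks: $w$ and $z$ each keep their old partner and possibly gain a nearer one, so they improve by the triple-versus-pair rule, while $x$'s nearest in the triple is her overall favourite, strictly better than $y$. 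This unified triple argument subsumes all three of your subcases at once and is shorter; your route, on the other hand, makes explicit why a blocking \emph{pair} need not exist (the crossing configuration is precisely the three-cycle of Example~\ref{Example:TSE}), which gives a bit more structural insight. One small point to tidy in your write-up: the blocking triple you build in the crossing case might happen to lie in $\mathcal{B}$; if so it is top-ranked rather than lexicographically ranked, but the conclusion is only strengthened, so you should note this rather than appeal solely to the lexicographic rule.
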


Before proving Lemma \ref{Lemma2}, let us explain why it suffices to prove Lemma \ref{Lemma1}. In fact, the number of players in this game is $3t+1$, for some $t \in \mathbb{N}$. If we remove all those players that belong to a subset $h_i$ coming from $\mathcal{B}$, the remaining number of players is $3s+1$ for some $s \in \mathbb{N}$. We just need to show that $s=0$, i.e., that all players but $\alpha$ are covered by groups coming from $\mathcal{B}$. What Lemma \ref{Lemma2} ensures is that if $s>0$, then it would be impossible to cover the remaining $3s+1$ players with the subsets $h_i$ not coming from $\mathcal{B}$, which is a contradiction. Hence, Lemma \ref{Lemma1} is proven by Lemma \ref{Lemma2}.

Putting everything together, we just need to prove that Lemma \ref{Lemma2} is true. We will show that each of the forbidden situations is indeed impossible in a strong equilibrium, following the same order:

\begin{enumerate}[i)]
    \item Assume that there is a group $h_i=\{x,y,z\}$ that does not come from $\mathcal{B}$, and a group $h_j=\{w\}$. Without loss of generality, $w$ falls between $x$ and $y$ within the circle that defines the lexicographic order of the preferences. Hence, both $x$ and $w$ would be strictly better if they change to form the group $\{x,w\}$, contradicting the fact that this was a strong equilibrium.
    \item Assume two groups $h_i=\{x,y\}, h_j=\{w,z\}$. When we restrict the circle to these four players, either the preferred co-player for $x$ is not $y$, or the preferred co-player for $y$ is not $x$ (or both). Without loss of generality, we assume the former case. It is clear that the users of a group of size $2$ will always improve their situation if someone else joins, then $x,z$ and $w$ would coordinate to form the group $\{x,z,w\}$, contradicting the fact that this was a strong equilibrium.
    \item Two isolated players will always prefer to merge.
\end{enumerate}

I.e., the forbidden situations from Lemma \ref{Lemma2} cannot exist in a strong equilibrium, which completes the proof the Theorem.

$\qedsymbol$

\subsection*{Continuation of the proof of Theorem \ref{Thm:TSENPComplete} b)}

The stable-roommate problem ($SRP$) is defined by a set of $2 \mu$ players for some $\mu \in \mathbb{N}$, such that for each player there is a list sorting the rest of the players in some strict order of preferences. The problem consists on determining whether all the players can be grouped in pairs, in a way where it never happens that $i$ and $j$ are not together but they both would be better if they were. We now prove that $SRP$ is equivalent to the restricted version of $CTG$ in which all groups have size 1 or 2, that we denote $CTG-2$.

\begin{proof}
Let $\mathcal{G}$ be the set of groups in an instance of $CTG-2$. Consider a user $i \in P$. Note that her preferences can be described as a list $L_i=(y_1,\ldots,y_{k_i},i)$, meaning that her preferred group is $(i,y_1)$, followed by $(i,y_2)$, and so on until $(i,y_{k_i})$ and then to travel alone. The groups that come after travelling alone are not relevant as they shall never be part of any equilibrium. Utilising this notation, we build an instance of $SRP$ as follows:
\begin{itemize}
    \item The set of players is $P \times P'$, where $P'$ contains one copy of each player in $P$.
    \item If the list of preferences for $x \in P$ is $L_x=(y_1,...,y_{k_x},x)$ in $CTG-2$, the preferences in $SRP$ are $(y_1,...,y_k,x')$ (i.e., the same one but switching $x$ by its copy in $P'$); the order of the players after $x'$ is irrelevant for this proof.
    \item For $x' \in P'$, its preferred match would be $x \in P$. The list continues with the elements in $P'$ according to a circle (i.e., it prefers $y'$ over $z'$ iff $y'-x'$ mod $n < z'-x'$ mod $\mu$, as described in the proof of the part a) of this Theorem), and the rest of the preferences is irrelevant.
\end{itemize}
We need to prove that there is a strong equilibrium in the original instance of $CTG-2$ iff there is a stable matching in this instance of $SRP$. To do that, it is useful to note first that an even-sized circle $\{x_1,\ldots,x_{2\mu\}}$ always presents a stable matching, by joining $x_i$ with $x_{i+\mu (\text{ mod }\mu)}$.

Take first a strong equilibrium in $CTG-2$, formed by the pairs $(a_1,b_1),\ldots,(a_k,b_k)$ and the users $c_1,\ldots,c_\ell$ travelling alone. We build a stable matching in $SRP$ as follows:
\begin{itemize}
    \item $a_i$ and $b_i$ are matched in $P$ for all $i=1,\ldots,k$.
    \item $c_i$ in $P$ is matched with $c_i'$ in $P'$ for all $i=1,\ldots,\ell$.
    \item The only players that remain to be matched are the copies in $P'$ of $a_i$ and $b_i$, for all $i=1,\ldots,k$. Thus, it is an even number of players in $P'$, that form a circle, so we can create a stable matching among them as discussed above.
\end{itemize}
It is straightforward to see that this is indeed a stable matching, thanks to the fact that it was built from a strong equilibrium in $CTG-2$.

Take now a stable matching in $SRP$. We first note that in this stable matching there is no pair $(x,y')$, with $x \in P, y' \in P'$ and $x \neq y$: if there was such a pair, then the matching would not be stable because $x$ and $x'$ would prefer be matched together. Therefore, each $x \in P$ is either matched with a different $z \in P$ or with $x'$. We build the strong equilibrium taking exactly these assignments: each pair $(x,z)$ formed by two elements in $P$ will form a group in $CTG-2$, and those $y$ that are paired with their copy $y' \in P'$ will travel alone in $CTG-2$. It is straightforward to see that such an assignment makes a strong equilibrium, thanks to the fact that it was built from a stable matching.
\end{proof}

\subsection*{Proof of Theorem \ref{lemma:URE}}

We prove the Theorem by finding a RUE, which is done algorithmically. In short, we begin with everybody travelling alone, and each time we find two mergeable groups, we merge them, which we repeat until we find no more. The resulting profile of strategies is a RUE. We provide the respective pseudo-code in algorithm \ref{Alg:SRE}.

\begin{algorithm}
\begin{algorithmic}
\caption{Construction of a RUE.}\label{Alg:SRE}
\STATE {$\forall i \in P, G_i = \{i\}$};
\STATE{$v=0$}; \text{ \% \small{$v$ is an auxiliary variable to end the following while cycle.}}
\WHILE{v=0}
\STATE{$v=1$;}
\FORALL {$i,j \in P \text{ such that } G_i \neq G_j$}
\IF{$G_i$ and $G_j$ are mergeable}
\STATE{$v=0$;} \text{ \% The cycle continues}
\STATE{$G_i,G_j\leftarrow G_i \cup G_j$;}
\ENDIF
\ENDFOR
\ENDWHILE
\STATE{Output $(G_i)_{i \in P}$;}
\end{algorithmic}
\end{algorithm}

By construction, the output from algorithm \ref{Alg:SRE} is a RUE:
\begin{itemize}
    \item The algorithm only stops if there are no more pairs of mergeable groups.
    \item The output is also a TNE. As the algorithm only induces Pareto-improvements, everybody ends-up better-off (or equal) than in the initial situation, i.e., than travelling alone. This is exactly the definition of being a TNE.
\end{itemize}  
Moreover, algorithm \ref{Alg:SRE} does stop because it starts with a profile of strategies, and at each step that is kept but with groups of increasing size.

\subsection*{Example of an instance of CTG with no matching that is RHE and RUE}

Consider a game with five players $A,B,C,D,E$. As in the proof of Theorem \ref{Thm:TSENPComplete}, we do not need to impose that if $H \subseteq G$, then $c(H) \leq c(G)$, and it suffices to exhibit the players' orders of preferences. The set $\mathcal{G}$ is defined by:
\begin{equation*}
  \mathcal{G}=\{A,B,C,D,E,AB,AE,BC,CD,DE,ABC,ABE,ADE,BCD,CDE \}
\end{equation*}
The set $\mathcal{G}$ should have the property that if $G \in \mathcal{G}$ and $H \subseteq G$, then $H \in \mathcal{G}$, which is not the case with this definition of $\mathcal{G}$. We assume that those groups $H$ are in $\mathcal{G}$, but all members of $H$ would prefer to travel alone, so $H$ will not be a part of any equilibrium and we can omit them.

In Table \ref{Table:ExampleRUERHE} we show, for each user, how they sort their preferences. It is assumed that travelling alone comes right after the last element of the respective columns. For instance, the best choice for player $A$ would be to form the group $ABC$, whereas her worst option before travelling alone is $ADE$. Note that the preferences are symmetric: for player $p_i$ and using the sum mod $5$, her order of preferences is: $p_{i+1}p_{i+2},p_{i+1},p_{i+1}p_{i-1},p_{i-1},p_{i-2}p_{i-1}$.

\begin{table}[H]
\centering
\begin{tabular}{|L|L|L|L|L|}
\hline
\textbf{A} & \textbf{B} & \textbf{C} & \textbf{D} & \textbf{E}  \\
\hline
ABC         & BCD         & CDE         & DAE         & EAB          \\
AB          & BC          & CD          & DE          & EA           \\
ABE         & BAC         & CBD         & DCE         & EAD          \\
AE          & BA          & CB          & DC          & ED           \\
ADE         & BAE         & CAB         & DBC         & ECD \\    
A         & B        & C         & D         & E \\    \hline
\end{tabular}
\caption{Feasible co-travellers for each of the five users, sorted according to their preferences.}
\label{Table:ExampleRUERHE}
\end{table}

We now show that there is no matching for this instance that is a RUE and a RHE. First, such a matching could not contain any group of size $3$ because they are not hermetic. In fact, all those groups are of the type $p_ip_{i+1}p_{i+2}$, and the subgroup $p_{i+1}p_{i+2}$ would always want to leave. Discarding the groups of size $3$, there are three remaining options for $A$: travelling alone, with $B$ or with $E$. It is useful to note that two groups of the form $p_{i-1}p_i$ and $p_{i+1}$, respectively, are always mergeable.

\begin{itemize}
    \item If $A$ travels alone, $B$ can travel alone, but it would merge with $A$, or $B$ can travel with $C$. In the last case, $D$ would travel with $E$, but $A$ and $DE$ are mergeable.
    \item If $A$ travels with $B$, $C$ can travel alone or with $D$. If $C$ travels alone, $AB$ and $C$ are mergeable. If $C$ travels with $D$, $E$ travels alone, but $CD$ and $E$ are mergeable.
    \item If $A$ travels with $E$, $B$ can travel alone or with $C$. In the first case, $AE$ and $B$ are mergeable. In the second case, $BC$ and $D$ are mergeable.
\end{itemize}

Therefore, any feasible matching for this instance contains either a pair of mergeable groups or a non-hermetic group.

\section*{Proof of Theorem \ref{Thm:Subgroups}}
The proof consists of two parts. First, we prove that any group with negative excess is hermetic. Second, we show that in any optimal matching every group leads to a negative excess. The fact that every optimal matching is a RUE follows directly from Lemma \ref{Lemma:BudgetBalancedRUE}.

Let $G$ be a group with negative excess, implying that $c_i(G)\leq z_i(G)$ (Eq. \ref{Eq:ciLowerThanzi}), and let $H \subseteq G$. We shall show that users in $H$ do not want to coordinate to leave $G$. Take $i \in H$ such that 
\begin{equation}\label{Eqn:AuxSubgroups1}
\forall j \in H, z_i(G) \leq z_j(G)
\end{equation}

By definition of the functions $z$, Eq. \eqref{Eqn:AuxSubgroups1} implies that $\varphi_i(G)$ was the first one selected among $\{\varphi_j(G): j \in H\}$. Therefore, the set $\varphi_i(H)$ was completely available (i.e., in $W$ in Algorithm \ref{Alg:Subgroup}) when $\varphi_i(G)$ was selected. This implies that $z_i(H) \geq z_i(G)$, as $z_i(H)$ is computed optimising over a smaller set. If $e(H) \geq 0$, then 
\begin{equation} \label{Eq:ProofRHERUE}
   c_i(H) \geq z_i(H) \geq z_i(G) \geq c_i(G) 
\end{equation}
The positive excess of $H$ explains the first inequality, the second inequality was explained in the previous paragraph and the third inequality is due to the negative excess of $G$. If $e(H)<0$ and $c_i=z_i(H)$, Eq. \eqref{Eq:ProofRHERUE} still holds. Finally, if $e(H)<0$ and $c_i(H)<z_i(H)$:

\begin{equation}
    c_i(H) \geq \frac{c(H)}{|H|} \geq z_i(G) \geq c_i(G) 
\end{equation}

Where the first inequality is due to the definition of the costs in the case of negative excess, which diminishes the costs of some subgroups but never below the average cost of the whole group (Eq. \ref{Eq:CostsSubgroupNegativeExcess}), the second inequality is explained because $\varphi_i(G)$ is selected when the whole subset $H$ is available, and the third inequality holds because $e(G)\leq 0$ (Eq. \ref{Eq:ciLowerThanzi}). Putting everything together, we have shown that $c_i(H) \geq c_i(G)$, i.e., $i$ would not leave $G$ to form $H$. Therefore, $G$ is indeed hermetic.\\

\vspace{2mm}

The negative excess of $G$ in an optimal matching emerges when recalling that sets $\varphi^j$ are a partition of $G$. Indeed:
\begin{equation}
    \sum_{i \in G} z_i(G) = \sum_{i \in G} \frac{c(\varphi_i(G))}{|\varphi_i(G)|} = \sum_{j=1}^Q c(\varphi^j) \geq c(G)
\end{equation}

Note that the third sum is adding each $\varphi_i(G)$ once. The second equality is achieved by noting that each $\frac{c(\varphi_i(G))}{|\varphi_i(G)|}$ is added exactly $|\varphi_i(G)|$ times, and the final inequality is true because sets $\varphi^j$ are a partition, that has to be sub-optimal because forming the group $G$ is part of the optimal matching.

\end{document}